\newcommand{\BigO}[1]{\ensuremath{\operatorname{O}\bigl(#1\bigr)}}
\newtheorem{theorem}{Theorem}[section]
\newtheorem{lemma}[theorem]{Lemma}
\newtheorem{definition}[theorem]{Definition}
\newtheorem{notation}[theorem]{Notation}
\begin{document}


\title{ Sorting Networks: The Final Countdown }
\author{Martin Marinov\\ David Gregg}
\maketitle




\begin{abstract}

In this paper we extend the knowledge on the problem of empirically searching for sorting networks of minimal depth. We present new search space pruning techniques for the last four levels of a candidate sorting network by considering only the output set representation of a network. We present an algorithm for checking whether an $n$-input sorting network of depth $d$ exists by considering the minimal up to permutation and reflection itemsets at each level and using the pruning at the last four levels. We experimentally evaluated this algorithm to find the optimal depth sorting networks for all $n \leq 12$.

\end{abstract}

%

\section{Introduction}

A sorting network is an abstract mathematical model designed to sort numbers in a predetermined sequence of comparators. A sorting network consists of $n$ wires and comparators between pairs of wires such that any input of $n$ numbers is sorted by the network, where one wire corresponds to one number. The two most common measures of sorting networks are the total number of comparators --- \emph{Bose-Nelson's sorting problem~\cite{Bose:1962:SP}} --- and the number of network levels, also referred to as depth. Our work is related to empirically searching for optimal depth sorting networks.

\section{Related Work and Contributions}

Knuth \cite{Knuth73} showed the optimal depth sorting networks for all $n \leq 8$. He also presents the zero-one principle of sorting networks which states that if a comparator network sorts all $2^n$ binary strings of length $n$ then it is a sorting network.

Parberry~\cite{Parberry89} presented a computer assisted proof for the minimal depth of a nine-input sorting network. He significantly reduced network level candidates for the first two levels, in comparison to the naive approach, by exploiting symmetries of the networks (referred to as first and second normal form \cite{Parberry89}). For the remaining network levels he proves that we need to only consider ones with maximal number of comparators. Parberry also found a method, referred to as ``The Heuristic'', to significantly reduce the search space for the second last level of the network. He used a CRAY super computer to test all nine-input comparator networks of depth six that are in second normal form and pass ``The Heuristic'' check. He verified experimentally that none of them are sorting networks. It is an immediate consequence of his result that there does not exist a ten-input sorting network of depth six. The presented pruning techniques in this paper are at least as good as Parberry's ones for the last two levels, and the third and fourth last levels our work is novel.

Codish~\cite{CodishCS14a_The_End_Game} presents safe pruning techniques for the last layer of a sorting network that are aimed at improving algorithms that use the SAT encoding of sorting networks and transform the problem to SAT problem. Parberry~\cite{Parberry89} has already presented an 'on-the-fly' method of constructing the last layer. Codish's work is related to the case when the ``current'' comparator network is not ``known'' (by the algorithm). Hence they invented conditions that would suit this specific case of encoding the optimal depth sorting network problem as a SAT problem.  If the current comparator network (or its full output set) is known by the algorithm then Parberry's result is much stronger than Codish's. In this paper, we focus on the case when the network is known and hence we present improvements over Parberry's technique.

Bundala~\cite{BundalaCCSZ14_Optimal_Depth} presented a computer assisted proof for the optimal depths of networks with eleven to sixteen (inclusive) inputs. He also managed to significantly reduce the number of candidates for the second layer in comparison to Parberry's approach, by considering only networks whose outputs are minimal representative up to permutation and reflection. Similar work for the second level is also presented by Michael Codish in \cite{CodishCS14:Two_Layer_Prefix}. Bundala's algorithm for finding sorting networks of optimal depth is based on a SAT encoding of the optimal depth sorting networks problem, which uses the set of candidate two-layer networks as a fixed entry point. Some extra pruning techniques are presented and they use a state of the art SAT solver to find the optimal depth sorting networks for all $n \leq 16$. 

Our algorithm and all related techniques are developed independently of that of Bundala \cite{BundalaCCSZ14_Optimal_Depth} and Codish \cite{CodishCS14:Two_Layer_Prefix} \cite{CodishCS14a_The_End_Game}. Using our program, we manage to prove the optimal sorting networks for all $n \leq 12$. Although, the approach presented in this paper significantly differs from that of Bundala; instead of using a generic SAT solver, our method studies details about the structure of sorting networks and gives much more enriched answer to the yes/no question being asked ('does there exists and $n$-input sorting network of depth $d$?'). We give more insight on the candidates that are needed to be considered at each level --- also referred to as complete set of filters $F_{n, d}$ for comparator networks of $d$ \cite{Marinov:SortingNetworks:ThirdLevel}.

Marinov~\cite{Marinov:ExtremalSets:Permutation} presented a highly efficient practical algorithm for finding the minimal representative itemsets over a domain $D$ up to a permutation of $D$. This algorithm can be applied to reduce the number of candidates for the second layer as described in Bundala and Codish, although \cite{BundalaCCSZ14_Optimal_Depth} and \cite{CodishCS14:Two_Layer_Prefix} present an extra pruning method using reflection.

Marinov~\cite{Marinov:SortingNetworks:ThirdLevel} presented a modified version of \cite{Marinov:ExtremalSets:Permutation} that finds the comparator networks whose outputs are minimal representative up to permutation and reflection $R_{n, 3}$ for the first three layers. This significantly reduces the search space size for any $n \geq 3$. This technique can be easily adapted to find the sets $R_{n, d}$ for and depth $d$, which means that we can further reduce the size of the search space using Marinov's existing technique. Marinov's technique is also applicable to the SAT encoding of a network and would also speedup up Bundala's algorithm by fixing the first three layers of a network. rather than just the first two as described in \cite{BundalaCCSZ14_Optimal_Depth}.

\subsection{ Problem Statement }

The problem addressed in this paper is that of reducing the candidate networks of any depth that need to be considered, i.e. for every $d$ we want to find a set $X_d \in F_{n, d}$ that is minimal in size. Solving this problem inevitably leads us to a new algorithm for finding sorting networks of minimal depth which was first detailed by Knuth \cite{Knuth73} more than 40 years ago. If we ask the question ``Does there exists an $n$-input sorting network of depth $d$'' then our work is about reducing the sizes of the sets $X_{d-3}$, $X_{d-2}$, $X_{d-1}$ and $X_{d}$

\subsection{Contributions}

\begin{itemize}
	\item \emph{Construct the From, To and Reach sets for the last $k$ levels a comparator network using output sets} --- we present a novel method to construct the From, To and Reach sets using outputs sets instead of comparator networks as described by Parberry~\cite{Parberry89}. For the case of $k = 2$ we prove that for any channel $c$ the $From_P(c)$, $To_P(c)$ and $Reach_P(c)$ sets computed by Parberry's approach are contained in the respective sets computed by our new method . Hence we prove that when $k = 2$ our new approach rejects at least as many candidates as Parberry's method - ``The Heuristic'' \cite{Parberry89}. 
	
	\item \emph{New theory for finding comparator networks that cannot be extended to sorting ones by three or four levels} --- we present a novel theoretical upper bound on the sizes of the From, To and Reach sets of any $n$-input comparator network for third and fourth last levels. We apply these upper bounds to detect comparator networks that cannot be extended to sorting ones with the addition of three or four levels. Hence, we further reduce the search space when checking if an $n$-input sorting network of depth $d$ exists by rejecting comparator network candidates of depth $d-4$, $d-3$ when checking if an $n$-input sorting network of depth $d$ exists.
	
	\item \emph{Algorithm for finding minimal depth $n$-input sorting networks } --- our new algorithm presented in this paper is a combination of finding the minimal representative up to permutation and reflection itemsets at every depth and also applying our new sortable in two, three and four level search space reduction techniques. This algorithm gives transparency on the number of candidate networks considered at each depth which is highly different than the current state of the approach.
	
	\item \emph{Computer assisted proof that the minimal depths $n$-input sorting network for all $n \leq 12$} --- at the time when our work was developed, it was previously thought that sorting networks of depth seven could exist for eleven and twelve-input networks but had not yet been found. The optimality of sorting networks for $n \leq 12$ that is presented in this paper is not to be considered as novel because Bundala~\cite{BundalaCCSZ14_Optimal_Depth} has already proven it.
\end{itemize}

\section{ Background }

\subsection{ Formal Definition of Comparator/Sorting Networks }

\begin{definition}
\label{ComparatorNetworkDefinition}
A \emph{generalized comparator} is an ordered pair $\langle i, j \rangle$ such that $1 \leq i \neq j \leq n$. A generalized comparator is a \emph{comparator} or \emph{min-max comparator} if $i < j$. The values $i$ and $j$ are referred to as \emph{channels}.
A \emph{generalized level} $L$ is a set of generalized comparators such that each channel is involved in at most one generalized comparator, formally if $\langle a, b \rangle, \langle c, d \rangle \in L$ then $\lvert \{ a, b, c, d \} \lvert = 4 $. A generalized level is a \emph{level} or \emph{min-max level} if it consists only of (min-max) comparators. The set of all (min-max) levels is denoted as $G_n$, as described by Bundala~\cite{BundalaCCSZ14_Optimal_Depth}.
A \emph{generalized $n$-input comparator network} is a vector $\langle L_1, L_2, \dots, L_d, n \rangle $, where $L_1, L_2, \dots, L_d$ are generalized levels, and $n$ is a positive integer. A generalized $n$-input comparator network is called an \emph{$n$-input comparator network} if it consists only of (min-max) levels. Let $C = \langle L_1, L_2, \dots, L_d, n \rangle$ be an $n$-input comparator network, we define the size of $C$ as the number of levels, i.e. $|C| = d$.
\end{definition}

So far we have formally defined the structure of a (generalized) comparator network. We need to define the \emph{output} of applying a comparator network to an \emph{input}, where an input is an n-bit binary string \cite{Knuth73}. Applying a network to an input permutes the input vector. Hence, for any fixed input we can define a permutation that models the network behaviour when applied to that particular input.

\begin{notation}
\label{PermutationDefinition}
Denote the set of all permutations of n elements as $\Pi_n = \{ \pi : \{1, 2, \dots, n\} \longmapsto \{1, 2, \dots, n\}$ $\lvert$ $\pi$ is bijective $\}$. Let $v = \langle a_1, a_2, \dots \rangle$ be a vector. Denote by $v_i$ the $i$-th coordinate of $v$, namely $v_i = a_i$.
\end{notation}

\begin{definition}
\label{def:ComparatorNetworkEvaluation}
An \emph{input} is a vector $\mathbf{x} \in \{0,1\}^n$ as per Knuth's~\cite{Knuth73} zero-one principle. Denote by $I_n$ the set of all inputs. The evaluation of a generalized $n$-input comparator network $C = \langle L_1, \dots, L_d, n \rangle $ in channel $i$ at level $k$ on input $x$ is the two dimensional vector $e_x(i, k)$ where:

\[
 e_x(i, k) =
  \begin{cases} 
	\langle x_i, i \rangle		& \text{ $ k = 0 $}											\\
    e_x(i, k-1)		& \text{ $ \langle i, j \rangle \in L_k $ and $ e_x(i, k-1)_1 <= e_x(j, k-1)_1 $}	\\
    e_x(j, k-1)		& \text{ $ \langle i, j \rangle \in L_k $ and $ e_x(i, k-1)_1 >  e_x(j, k-1)_1  $}	\\
    e_x(i, k-1)		& \text{ $ \langle j, i \rangle \in L_k $ and $ e_x(i, k-1)_1 >= e_x(j, k-1)_1 $}	\\
    e_x(j, k-1)		& \text{ $ \langle j, i \rangle \in L_k $ and $ e_x(i, k-1)_1 <  e_x(j, k-1)_1$}	\\
	e_x(i, k-1)		& \text{ otherwise }										\\
  \end{cases}
\]

The output of applying $C$ to $x$ is \emph{$V_C(x) = \langle e_x(1, d)_1, \dots, e_x(n, d)_1 \rangle \in I_n$}. The permutation of the coordinates when applying $C$ to  $x$ is \emph{$P_C(x) = \langle e_x(1, d)_2, \dots, e_x(n, d)_2 \rangle \in \Pi_n $}.
\end{definition}

Intuitively, we say that a vector in $I_n$ is sorted if its values are non-decreasing left-to-right, and a sorting network is one which sorts all possible $2^n$ input vectors. More formally:

\begin{definition}
\label{SortingNetworkDefinition}
The vector $\langle x_1, x_2, \dots, x_n \rangle \in \{0,1\}^n$ is \emph{sorted} iff $x_i <= x_{i + 1}$ for all $1 \leq i < n$.
A \emph{generalized sorting network} is a generalized $n$-input comparator network for which there exists a permutation $\pi \in \Pi_n$ such that $\pi( V_C(x) )$ is sorted for all inputs $ x \in I_n$.
A \emph{sorting network} is an $n$-input comparator network such that $ V_C( x ) $ is sorted for all inputs $ x \in I_n$.
\end{definition}

\begin{theorem}
\label{FloydKnuthGeneralizedTheorem}
For every generalized sorting network there is a sorting network with the same size and depth. If the former has only min-max comparators in the first k levels, then the latter is identical in the first k levels.
\end{theorem}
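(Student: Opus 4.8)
The plan is to prove the displayed statement in the stronger form that for \emph{every} generalized comparator network $C=\langle L_1,\dots,L_d,n\rangle$ there is an (all min--max) comparator network $C'$ with $V_{C'}=\sigma\circ V_C$ for some fixed $\sigma\in\Pi_n$, obtained from $C$ without adding or deleting a level and by replacing each comparator with exactly one comparator, and agreeing with $C$ on the longest prefix $L_1,\dots,L_\ell$ of min--max levels of $C$. The theorem will then follow once we also argue that $\sigma$ is forced to act as the identity on sorted vectors.

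The construction I propose processes the levels from front to back, repairing one level at a time, and uses two elementary facts. (i) If $C''$ is obtained from a generalized network by relabelling \emph{every} channel of \emph{every} level by a fixed $\rho\in\Pi_n$, then $V_{C''}(x)=\rho(V_C(\rho^{-1}(x)))$; this is just a change of variable, and generalized levels are closed under relabelling, so $C''$ is again a legitimate generalized network. (ii) If the last level $L_d$ contains the twisted comparators $\langle i_1,j_1\rangle,\dots,\langle i_r,j_r\rangle$ (so $i_t>j_t$) and $C''$ is $C$ with each such $\langle i_t,j_t\rangle$ rewritten as the min--max comparator $\langle j_t,i_t\rangle$, then $V_{C''}=\sigma\circ V_C$, where $\sigma$ is the product of the pairwise disjoint transpositions $(i_t\ j_t)$ --- the only effect of the rewrite is to exchange the final values on channels $i_t$ and $j_t$. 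With these in hand I would prove the stronger statement by induction on $d-\ell$. If $\ell=d$, take $C'=C$ and $\sigma=\mathrm{id}$. Otherwise $L_{\ell+1}$ has at least one twisted comparator; applying (ii) to the sub-network $\langle L_1,\dots,L_{\ell+1},n\rangle$ untwists $L_{\ell+1}$ and produces a correcting permutation $\sigma_{\ell+1}$, and then applying (i) with $\rho=\sigma_{\ell+1}$ to the tail $L_{\ell+2},\dots,L_d$ yields a network $\tilde C$ with $V_{\tilde C}=\sigma_{\ell+1}\circ V_C$ whose first $\ell+1$ levels are min--max and which differs from $C$ only in level $\ell+1$ and beyond; the induction hypothesis applied to $\tilde C$ closes the step. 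The key structural points are that every relabelling performed is confined to levels of index strictly greater than the level being repaired, so the maximal min--max prefix of the original $C$ is never disturbed (this yields the ``first $k$ levels'' clause), and that levels are only ever rewritten comparator-for-comparator (this yields ``same size and depth'').

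To conclude, let $C$ be a generalized sorting network, so there is a \emph{fixed} $\pi\in\Pi_n$ with $\pi(V_C(x))$ sorted for all $x$; writing $s(x)$ for the sorted rearrangement of $x$, this says $\pi\circ V_C=s$. The stronger statement gives an all min--max $C'$ with $V_{C'}=\sigma\circ V_C=\sigma\pi^{-1}\circ s$. I then invoke the elementary observation that a min--max comparator network fixes every already-sorted input: a min--max comparator $\langle i,j\rangle$ with $i<j$ leaves a sorted $0$--$1$ vector unchanged, so by induction over the levels $V_{C'}(y)=y$ whenever $y$ is sorted. In particular, for the sorted weight-$w$ vector $y'$ we get $y'=V_{C'}(y')=\sigma\pi^{-1}(s(y'))=\sigma\pi^{-1}(y')$; since this holds for every $w$, the permutation $\sigma\pi^{-1}$ fixes all sorted vectors, whence $V_{C'}(x)=\sigma\pi^{-1}(s(x))=s(x)$ for all $x$. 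Thus $C'$ is a sorting network; it has the same number of levels and comparators as $C$ and coincides with $C$ on any min--max prefix, which is the theorem.

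I expect the main obstacle to be the bookkeeping in the inductive step: relabelling a later level can turn min--max comparators into twisted ones, which is exactly why the repair must proceed front to back and touch only the tail, and one must track carefully that the accumulated correction acts on the output function by composition on the left. The short but conceptually decisive ingredient is the last step --- that ``sorts up to a fixed permutation'' forces the correction permutation to fix every sorted vector, using that a min--max network cannot disturb an already sorted input.
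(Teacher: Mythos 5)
Your proof is correct: it is a faithful reconstruction of the standard Floyd--Knuth ``untangling'' argument, whereas the paper itself offers no proof and simply cites Knuth. The front-to-back repair (untwist the first twisted level, conjugate the tail by the resulting disjoint transpositions, and induct on the length of the remaining suffix), together with the closing observation that a min--max network fixes every sorted input so the accumulated correction permutation cannot spoil sortedness, is exactly the intended argument, and your bookkeeping of the untouched min--max prefix correctly yields the ``first $k$ levels'' clause.
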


\begin{proof}
See Knuth~\cite{Knuth73}.
\end{proof}

\subsection{Known Comparator Network Theory}
\label{sec:background:theory:overlap}

\begin{definition}
\label{OutputSetDefinition}
Let the \emph{output set} of a comparator network $C$ be $S_C = \{ V_C(x) \lvert x \in I_n \}$. Let the set of all already sorted inputs $T_n = \{ (x_1, x_2, \dots, x_n)$ $\lvert$  $x_{i < j} = 0, x_{i >= j} = 1$ for $ 1 \leq j \leq n + 1 \}$.
\end{definition}

\begin{definition}
\label{ComparatorNetworkUnionLevelDefinition}
Let $A$ and $B$ be $n$-input comparator networks, where $A = \langle A_1, A_2, \dots, A_d, n \rangle $, $B = \langle B_1, B_2, \dots, B_k, n \rangle $, and let $L$ be a level. Define the concatenations $A \oplus L = \langle A_1, \dots, A_d, L, n \rangle $ and $A \oplus B = A \oplus B_1 \oplus B_2 \oplus \dots \oplus B_k$. Note that $\oplus$ is associative.
\end{definition}

\begin{theorem}
\label{MinimalOutputSetTheorem}
Let $A$, $B$ and $C$ be $n$-input comparator networks. Suppose that $S_{A} \subseteq S_{B}$ and $B \oplus C$ is an $n$-input sorting network. Then there exists a comparator network $C'$ with the same depth as $C$ such that $A \oplus C'$ is an $n$-input sorting network.
\end{theorem}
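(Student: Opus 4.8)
The plan is to show that one can simply take $C' = C$: in fact $A \oplus C$ is already an $n$-input sorting network. Everything rests on one structural fact about concatenation, namely that the output of a concatenated network is obtained by running the second network on each output of the first. Concretely, I would prove that $V_{B \oplus C}(x) = V_C\bigl(V_B(x)\bigr)$ for every input $x \in I_n$, and hence $S_{B \oplus C} = \{\, V_C(s) : s \in S_B \,\}$, with the analogous identity for $A$ in place of $B$.

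To establish the composition identity, recall from Definition~\ref{ComparatorNetworkUnionLevelDefinition} that the levels of $B \oplus C$ are those of $B$ followed by those of $C$. Unrolling the recursive evaluation $e_x$ of Definition~\ref{def:ComparatorNetworkEvaluation} through $B \oplus C$, the first $|B|$ levels reproduce verbatim the evaluation of $B$ on $x$, so after level $|B|$ the first coordinates spell out $V_B(x)$; the remaining $|C|$ levels then behave exactly as the evaluation of $C$ on the input $V_B(x)$. This is a routine induction on the number of levels, the only mild subtlety being that $e_x$ also carries a provenance coordinate, which plays no role in the comparator decisions since those depend only on the first coordinates. Taking the union over all $x \in I_n$ and using that $V_B$ maps onto $S_B$ yields $S_{B \oplus C} = \{\, V_C(s) : s \in S_B \,\}$.

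With this in hand the theorem follows immediately. Since $B \oplus C$ is a sorting network, $V_{B \oplus C}(x)$ is sorted for every $x$, so $V_C(s)$ is sorted for every $s \in S_B$ by the identity above. As $S_A \subseteq S_B$ by hypothesis, $V_C(s)$ is in particular sorted for every $s \in S_A$. Now fix any input $x \in I_n$; then $V_A(x) \in S_A$, so $V_{A \oplus C}(x) = V_C\bigl(V_A(x)\bigr)$ is sorted. Since $x$ was arbitrary, $A \oplus C$ sorts every input and is therefore an $n$-input sorting network by Definition~\ref{SortingNetworkDefinition}. Its depth is $|A| + |C|$ and its last $|C|$ levels form a copy of $C$, so $C' = C$ has the same depth as $C$, as required.

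There is essentially no hard step here: the composition identity $V_{B \oplus C} = V_C \circ V_B$ is the only part that deserves to be written out carefully, and even that is bookkeeping around the definition of $e_x$. The slightly counterintuitive aspect worth emphasising in the write-up is why the conclusion is phrased in terms of an abstract $C'$ at all --- for this statement the witness is literally $C$ itself, and the extra generality only matters when one later wants to additionally normalise the tail of the network.
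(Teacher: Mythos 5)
Your proof is correct: the composition identity $V_{B\oplus C}(x)=V_C(V_B(x))$ is exactly the right key fact, and with it the witness $C'=C$ works, since $S_A\subseteq S_B$ forces $V_C(V_A(x))$ to be sorted for every input. The paper itself only cites Theorem~3.8 of an external reference rather than giving the argument, but your route is the standard one for the subset (non-permuted) case, and your closing remark is also on point --- the abstract $C'$ in the statement is only genuinely needed in the permuted variant (Theorem~\ref{MinimalPermutationOutputSetTheorem}), where $C$ must be replaced by an untangled version of $\pi^{-1}(C)$ via Theorem~\ref{FloydKnuthGeneralizedTheorem}.
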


\begin{proof}
See proof of Theorem~3.8 in \cite{Marinov:SortingNetworks:ThirdLevel}.
\end{proof}

Knuth~\cite{Knuth73} has shown that comparator networks are just as powerful as generalized comparator networks. He shows that the group of generalized comparator networks is closed under permutation. Intuitively, we would like to strengthen the result of Theorem~\ref{MinimalOutputSetTheorem} by considering permutations of output sets. Before we present this result, we need the following lemma to prove it.

\begin{lemma}
\label{lemma:PermutationSet}
Let $\pi \in \Pi_n$, $x \in I_n$, and $C$ be a comparator network such that $V_C(\pi(x))$ is sorted. Then $\pi(V_{\pi^{-1}(C)}(x))$ is sorted, where $\pi^{-1}(C)$ is a generalized comparator network.
\end{lemma}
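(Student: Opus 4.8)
The plan is to prove a \emph{pointwise} strengthening: applying $\pi^{-1}(C)$ to $x$ produces, channel by channel, exactly the values that applying $C$ to $\pi(x)$ produces on the $\pi$‑relabelled channels, from which the lemma drops out in one line. First I would fix notation consistent with how $\pi$ acts on vectors: for a level $L$ put $\pi^{-1}(L) = \{\langle \pi^{-1}(a), \pi^{-1}(b)\rangle : \langle a,b\rangle \in L\}$, and for $C = \langle L_1,\dots,L_d,n\rangle$ put $\pi^{-1}(C) = \langle \pi^{-1}(L_1),\dots,\pi^{-1}(L_d),n\rangle$. Since $\pi$ is a bijection, each $\pi^{-1}(L_k)$ still has every channel in at most one comparator, although these comparators need no longer be min‑max; hence $\pi^{-1}(C)$ is a legitimate \emph{generalized} comparator network, which is the only role the word ``generalized'' plays in the statement. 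I would also record the compatibility convention $\pi(x)_{\pi(i)} = x_i$ (relabelling input position $i$ to position $\pi(i)$), which is forced by relabelling channel $i$ of $C$ to $\pi(i)$.

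The heart of the argument is an induction on the level index $k$ establishing, for every channel $i$ and every $x \in I_n$,
\[
 e^{\pi^{-1}(C)}_x(i,k)_1 \;=\; e^{C}_{\pi(x)}(\pi(i),k)_1 .
\]
Only first coordinates occur, and this is exactly what keeps the induction clean: in Definition~\ref{def:ComparatorNetworkEvaluation} the outcome on each channel is decided purely by comparing first coordinates, and the first channel of a (generalized) comparator $\langle a,b\rangle$ always ends up with the minimum of the two incoming values while the second channel ends up with the maximum, independently of how ties are broken or which origin tag is carried along. The base case $k=0$ is $x_i = \pi(x)_{\pi(i)}$. For the step, fix $k\geq 1$ and a channel $i$. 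If $i$ lies in no comparator of $\pi^{-1}(L_k)$ then $\pi(i)$ lies in no comparator of $L_k$, and both sides fall back to their level‑$(k-1)$ values, where the induction hypothesis applies. If $\langle i,j\rangle \in \pi^{-1}(L_k)$ then $\langle \pi(i),\pi(j)\rangle \in L_k$, so the level‑$k$ value on channel $i$ of $\pi^{-1}(C)$ is the minimum of the level‑$(k-1)$ values on channels $i,j$, and the level‑$k$ value on channel $\pi(i)$ of $C$ is the minimum of the level‑$(k-1)$ values on channels $\pi(i),\pi(j)$; the induction hypothesis matches these two minima term by term. The case $\langle j,i\rangle \in \pi^{-1}(L_k)$ is identical with ``maximum'' replacing ``minimum''.

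Setting $k=d$ gives $V_{\pi^{-1}(C)}(x)_i = V_{C}(\pi(x))_{\pi(i)}$ for all $i$, i.e.\ $V_{\pi^{-1}(C)}(x) = \pi^{-1}\bigl(V_{C}(\pi(x))\bigr)$, and applying $\pi$ to both sides yields $\pi\bigl(V_{\pi^{-1}(C)}(x)\bigr) = V_{C}(\pi(x))$, which is sorted by hypothesis. I do not anticipate a real obstacle here, since the whole thing is a bookkeeping induction; the one spot that needs care is committing to a single consistent convention for how a permutation acts on a coordinate vector and on a set of comparators, and then confirming that the ``first channel receives the min, second channel receives the max'' reading of Definition~\ref{def:ComparatorNetworkEvaluation} is genuinely robust to its tie‑breaking clauses, so that tracking first coordinates alone suffices.
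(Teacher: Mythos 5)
Your proof is correct. Note that the paper itself does not contain a proof of this lemma---it simply defers to Lemma~3.9 of \cite{Marinov:SortingNetworks:ThirdLevel}---so there is no in-paper argument to compare against; your channel-relabelling induction, showing $e^{\pi^{-1}(C)}_x(i,k)_1 = e^{C}_{\pi(x)}(\pi(i),k)_1$ level by level and observing that the tie-breaking clauses of Definition~\ref{def:ComparatorNetworkEvaluation} affect only the second (tag) coordinate, is the standard and expected way to establish it, and your handling of why $\pi^{-1}(C)$ is merely a \emph{generalized} network is exactly the right point to flag.
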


\begin{proof}
See proof of Lemma~3.9 in \cite{Marinov:SortingNetworks:ThirdLevel}.
\end{proof}

Theorem~\ref{MinimalOutputSetTheorem} tells us that if we can extend the comparator network $B$ to a sorting network by appending $l$ levels to it then we can extend any network $A$ such that $S_A \subseteq S_B$ by appending $l$ levels to it. We now extend this result by weakening the constraint $S_A \subseteq S_B$. We show that it is enough to find one permutation $\pi \in \Pi_n$ such that $\pi(S_A) \subseteq S_B$ to claim that if we can extend the comparator network $B$ to a sorting network by appending $l$ levels to it then we can extend any network $A$ by appending $l$ levels to it.

\begin{theorem}
\label{MinimalPermutationOutputSetTheorem}
Let $A$, $B$ and $C$ be $n$-input comparator networks, and $\pi \in \Pi_n$ such that $ \pi(S_{A}) \subseteq S_{B}$ and $B \oplus C$ is an $n$-input sorting network. Then there exists a comparator network $C'$ with the same depth as $C$ such that $A \oplus C'$ is an $n$-input sorting network.
\end{theorem}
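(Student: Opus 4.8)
The plan is to reduce Theorem~\ref{MinimalPermutationOutputSetTheorem} to the already-established Theorem~\ref{MinimalOutputSetTheorem} by fabricating an intermediate min-max comparator network whose output set sits between $S_A$ and $S_B$. Concretely, starting from $B \oplus C$, a genuine sorting network, I would first use Lemma~\ref{lemma:PermutationSet} to transfer the sorting property across the permutation $\pi$. The idea is that since $\pi(S_A) \subseteq S_B$, every output vector of $A$, after relabelling by $\pi$, is an output vector of $B$, hence gets sorted by $C$; so for each $x \in I_n$ the vector $V_C(\pi(V_A(x)))$ is sorted, and Lemma~\ref{lemma:PermutationSet} then tells us that $\pi\bigl(V_{\pi^{-1}(C)}(V_A(x))\bigr)$ is sorted. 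Therefore the generalized network $A \oplus \pi^{-1}(C)$ is a generalized sorting network (its output, post-composed with the fixed permutation $\pi$, is always sorted).

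Next I would invoke Theorem~\ref{FloydKnuthGeneralizedTheorem}. The generalized network $A \oplus \pi^{-1}(C)$ has min-max levels in its first $|A|$ levels (those are exactly the levels of $A$, which is an honest comparator network), so the theorem yields a genuine sorting network that is \emph{identical to $A$ in its first $|A|$ levels}. Write this sorting network as $A \oplus C''$, where $C''$ is a min-max comparator network with the same depth as $\pi^{-1}(C)$, i.e.\ the same depth as $C$. At this point we are essentially done: $C'' $ is the desired $C'$, and $A \oplus C' = A \oplus C''$ is an $n$-input sorting network of the required depth. In fact this route is cleaner than first routing through Theorem~\ref{MinimalOutputSetTheorem}, since the Floyd--Knuth normalization does all the work of converting the generalized tail back into a min-max tail while pinning down the prefix $A$.

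The main obstacle, and the step that needs to be written carefully, is the bookkeeping in the first paragraph: checking that ``$\pi(S_A) \subseteq S_B$ and $C$ sorts all of $S_B$'' really does give, for every input $x$, that $\pi(V_A(x)) \in S_B$ and hence $V_C(\pi(V_A(x)))$ is sorted --- and then matching this up with the precise hypothesis of Lemma~\ref{lemma:PermutationSet}, whose statement is phrased for a single input and a comparator network $C$ with $V_C(\pi(x))$ sorted. One has to apply the lemma with the input taken to be $V_A(x)$ rather than $x$, and then observe that feeding $V_A(x)$ into $\pi^{-1}(C)$ is the same as running $A \oplus \pi^{-1}(C)$ on $x$; this uses that the output set $S_A$ is exactly $\{V_A(x) : x \in I_n\}$ and that evaluation of a concatenation factors through the intermediate output. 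A secondary point to be careful about is that Theorem~\ref{FloydKnuthGeneralizedTheorem} preserves both size and depth and leaves the first $|A|$ min-max levels untouched, which is exactly what guarantees the resulting tail $C'$ has the same depth as $C$; I would state this explicitly rather than leave it implicit.
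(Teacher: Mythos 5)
Your argument is correct and follows exactly the route the paper sets up (the paper itself only cites Theorem~3.10 of Marinov's earlier work, but it explicitly introduces Lemma~\ref{lemma:PermutationSet} as the ingredient needed for this proof, and your combination of that lemma with Theorem~\ref{FloydKnuthGeneralizedTheorem} to turn the generalized sorting network $A \oplus \pi^{-1}(C)$ back into a min-max network with the prefix $A$ preserved is the intended argument). The bookkeeping points you flag --- applying the lemma to the input $V_A(x)$ and using that evaluation of a concatenation factors through the intermediate output --- are indeed the only places requiring care, and you handle them correctly.
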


\begin{proof}
See proof of Theorem~3.10 in \cite{Marinov:SortingNetworks:ThirdLevel}.
\end{proof}

\begin{definition}
\label{def:min_pi}
Let $X$ be a set of output sets of $n$-input comparator networks. Define the set of all minimal representative output sets up to permutation of $X$ as $MinPi ( X ) = \{ S_A ~\lvert~ S_A \in X ~:~\nexists~ S_B \in X, \pi \in \Pi_n : B < A, \pi(S_B) \subseteq S_A  \}$, where by $B < A$ we denote the lexicographic order of networks, as described by Parberry~\cite{Parberry89}. Let the set of all output sets of $n$-input comparator networks of depth $d$ be defined as $G_{n,d}$. Let the set of all minimal representative output sets of $n$-input comparator networks of depth $d$ up to permutation be defined as $S_{n,d} = MinPi(G_{n,d})$.
\end{definition}

\begin{definition}
\label{def:filters}
The set $X_n$ of $n$-input comparator networks is a complete set of filters iff for any $n$-input sorting network of depth $d$ there exists one of the form $C : C'$ of depth $d$ for some $C \in X_n$. We would also denote the set of all complete sets of filters of that contain only $n$-input comparator networks with exactly $i$ levels as $F_{n, i} = \{ X_n ~|~ X_n ~is~a~complete~set~of~filters~and~ C \in X_n \implies |C| = i \}$ .
\end{definition}

\begin{definition}
\label{def:network:reflect}
Let $ x = \langle x_1, x_2, \dots, x_n \rangle \in I_n $ then $ \overline{x^R} = \langle \overline{x_n}, \overline{x_{n-1}}, \dots, \overline{x_1} \rangle $ where $x_i \in \{0, 1\}$ and $\overline{0} = 1$ and $\overline{1} = 0$.
Let $ L $ be a level the its reflection $L^R = \{ \langle n - j + 1, n - i + 1 \rangle  ~|~ \langle i, j \rangle \}$.
Let $ C = \langle L_1, L_2, \dots, L_d, n \rangle $ be a comparator network then its reflection $C^R = \langle L_1^R, L_2^R, \dots, L_d^R, n \rangle $.
\end{definition}

\begin{lemma}
\label{lemma:network:reflect}
Let $ C $ be a comparator network then $x \in S_C \iff \overline{x^R} \in S_{C^R} $.
\end{lemma}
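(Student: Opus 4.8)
The plan is to prove a stronger pointwise statement and then read off the lemma. Write $\phi(x) = \overline{x^R}$ for $x \in I_n$. First I would record three easy structural facts: $\phi$ is an involution on $I_n$ (reversing and complementing, done twice, returns $x$), hence a bijection $I_n \to I_n$; reflection of networks is compatible with concatenation, $(A \oplus L)^R = A^R \oplus L^R$, and $(C^R)^R = C$; and $C^R$ is again a \emph{min-max} comparator network whenever $C$ is, since $i < j$ forces $n-j+1 < n-i+1$ and the channel map $k \mapsto n-k+1$ is a bijection, so it preserves the disjointness condition defining a level. The target identity is
\[
 V_{C^R}(\phi(x)) = \phi\bigl(V_C(x)\bigr) \qquad \text{for all } x \in I_n .
\]

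The heart of the argument is a one-level calculation. Fix a min-max level $L$ and a vector $w \in I_n$, and let $w'$ be the result of evaluating $L$ on $w$, i.e.\ for each $\langle i,j \rangle \in L$ with $i < j$ we have $w'_i = \min(w_i, w_j)$, $w'_j = \max(w_i, w_j)$, and untouched channels are fixed (this is how Definition~\ref{def:ComparatorNetworkEvaluation} acts on the value coordinates). The claim is that evaluating $L^R$ on $\phi(w)$ yields exactly $\phi(w')$. The comparator $\langle i,j\rangle$ of $L$ corresponds to $\langle n-j+1, n-i+1\rangle$ in $L^R$, and $\phi(w)_{n-i+1} = \overline{w_i}$, $\phi(w)_{n-j+1} = \overline{w_j}$; since complementation is an order-reversing involution on $\{0,1\}$ we have $\overline{\min(a,b)} = \max(\bar a,\bar b)$ and $\overline{\max(a,b)} = \min(\bar a,\bar b)$, so the min/max placed by $L^R$ on channels $n-j+1, n-i+1$ are $\overline{w'_j}, \overline{w'_i}$, which are precisely the $(n-j+1)$-th and $(n-i+1)$-th coordinates of $\phi(w')$. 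Channels untouched by $L$ correspond to channels untouched by $L^R$, and there both sides agree directly from the definition of $\phi$. Because the comparators of a level (and of its reflection) act on pairwise disjoint channels, it suffices to check this comparator by comparator.

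With the one-level statement available, the pointwise identity follows by induction on $d = |C|$: for $d = 0$ both sides equal $\phi(x)$; for the step write $C = C' \oplus L$, so $C^R = C'^R \oplus L^R$, and apply the one-level claim with $w = V_{C'}(x)$ to $V_{C'^R}(\phi(x)) = \phi(V_{C'}(x))$ (inductive hypothesis), obtaining $\phi(V_{C'\oplus L}(x)) = \phi(V_C(x))$. Finally, since $\phi$ is a bijection on $I_n$,
\[
 S_{C^R} = \{ V_{C^R}(y) : y \in I_n \} = \{ V_{C^R}(\phi(x)) : x \in I_n \} = \{ \phi(V_C(x)) : x \in I_n \} = \phi(S_C),
\]
so $x \in S_C$ if and only if $\overline{x^R} = \phi(x) \in \phi(S_C) = S_{C^R}$, which is the asserted equivalence.

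The hard part is not conceptual but notational: the paper's evaluation is defined through the two-dimensional vectors $e_x(i,k)$ that also carry provenance indices, so one must be careful to note that $S_C$ depends only on the value coordinates $e_x(\cdot,d)_1$ and that on those coordinates a min-max level acts exactly by the min/max rule used in the one-level lemma. Once this is pinned down, the computation is routine, and I expect no further obstacle.
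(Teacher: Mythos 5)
Your proof is correct. The paper itself does not prove this lemma — it only cites Lemma~8 of Bundala et al.\ — and your pointwise commutation argument ($V_{C^R}(\overline{x^R}) = \overline{V_C(x)^R}$, established comparator-by-comparator for one level and then by induction on depth, followed by the observation that $x \mapsto \overline{x^R}$ is an involution on $I_n$) is precisely the standard proof given in that reference. No gaps.
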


\begin{proof}
Refer to the proof of Lemma~8 in \cite{BundalaCCSZ14_Optimal_Depth} by Michael Codish.
\end{proof}

\begin{lemma}
\label{lemma:reflect}
Let $R_{n, i}$ be the set of minimal representative up to permutation and reflection itemsets within $G_{n}$. Then $R_{n, i} \in F_{n, i}$.
\end{lemma}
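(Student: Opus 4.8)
The plan is to bootstrap from the already-established permutation result, Theorem~\ref{MinimalPermutationOutputSetTheorem}, and layer the reflection symmetry of Lemma~\ref{lemma:network:reflect} on top of it. First I would fix an arbitrary $n$-input sorting network $N$ of depth $d \geq i$ and split it as $N = A \oplus E$, where $A$ collects the first $i$ levels (so that $S_A \in G_{n,i}$) and $E$ the remaining $d-i$ levels. The goal is to produce some $C \in R_{n,i}$ together with a comparator network $E''$ with $|E''| = d-i$ such that $C \oplus E''$ is a sorting network of depth $d$; this is precisely a network of the form $C : C'$ of depth $d$ with $C \in R_{n,i}$. Since every member of $R_{n,i}$ is an output set of an $i$-level network, establishing this shows $R_{n,i}$ is a complete set of filters lying in $F_{n,i}$.

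By the definition of $R_{n,i}$ as the set of minimal representatives within $G_n$ up to permutation \emph{and} reflection, there is some $S_C \in R_{n,i}$ for which either (a) $\pi(S_C) \subseteq S_A$ for some $\pi \in \Pi_n$, or (b) $\pi(S_{C^R}) \subseteq S_A$ for some $\pi \in \Pi_n$. In case (a) I would apply Theorem~\ref{MinimalPermutationOutputSetTheorem} with the role of $B$ played by $A$, using $\pi(S_C) \subseteq S_A$ and the fact that $A \oplus E$ is a sorting network, to obtain a comparator network $E''$ of depth $d-i$ with $C \oplus E''$ a sorting network, and we are done. In case (b) I would instead apply Theorem~\ref{MinimalPermutationOutputSetTheorem} to $C^R$ in place of $C$, getting a network $E'$ of depth $d-i$ with $C^R \oplus E'$ a sorting network; then I would reflect the whole network, using that reflection distributes over $\oplus$ and is an involution on levels, so that $(C^R \oplus E')^R = C \oplus (E')^R$, and that reflection carries min-max comparator networks to min-max comparator networks of the same depth. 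Setting $E'' := (E')^R$ then gives a depth-$(d-i)$ network with $C \oplus E''$ a sorting network.

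The single auxiliary fact needed to close case (b) is that $C$ is a sorting network iff $C^R$ is. This follows from Lemma~\ref{lemma:network:reflect}: the map $\rho : x \mapsto \overline{x^R}$ is a bijection of $I_n$ with $S_{C^R} = \rho(S_C)$, and $\rho$ sends sorted vectors to sorted vectors (reversing a non-decreasing $0/1$ string and then complementing it again yields a non-decreasing string), hence $\rho$ restricts to a bijection of the set $T_n$ of sorted inputs; therefore $S_C = T_n \iff S_{C^R} = \rho(S_C) = \rho(T_n) = T_n$.

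I expect the main subtlety to be bookkeeping rather than mathematical depth: one must pin down ``minimal representative up to permutation and reflection'' precisely enough that the dichotomy (a)/(b) above is exactly the one that arises (equivalently, that the group acting on output sets is generated by the coordinate permutations together with $\rho$, and that the non-permutation coset is realised by the operation $C \mapsto C^R$), and one must dispose of the degenerate case $d < i$, in which no sorting network of the prefixed form $C : C'$ can exist but which does not occur for the values of $n$ and $i$ under consideration. Everything else is a direct combination of Theorem~\ref{MinimalPermutationOutputSetTheorem} and Lemma~\ref{lemma:network:reflect}.
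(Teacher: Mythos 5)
Your argument is correct, but note that the paper does not actually prove Lemma~\ref{lemma:reflect} itself: its ``proof'' is a pointer to Section~4.2 of Bundala and Z\'avodn\'y \cite{BundalaCCSZ14_Optimal_Depth}. What you have written is essentially the self-contained reconstruction of that cited argument using the paper's own machinery: decompose a sorting network as $A \oplus E$ with $|A| = i$, invoke minimality of $R_{n,i}$ to get either $\pi(S_C) \subseteq S_A$ or $\pi(S_{C^R}) \subseteq S_A$, dispatch case (a) by Theorem~\ref{MinimalPermutationOutputSetTheorem} directly, and dispatch case (b) by applying the same theorem to $C^R$ and then reflecting the resulting sorting network, using that reflection commutes with $\oplus$, is an involution on min-max levels, and preserves the sorting property (which you correctly derive from Lemma~\ref{lemma:network:reflect} via $\rho(T_n) = T_n$). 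The two subtleties you flag are the real ones: the dichotomy (a)/(b) requires a well-founded descent under the lexicographic order $B < A$ to reach a minimal representative, together with the observation that $\rho$ normalises the permutation action (so the group generated by $\Pi_n$ and $\rho$ has only the two cosets you consider); and the definition of $MinPi$ in Definition~\ref{def:min_pi} is stated only for permutations, so the permutation-and-reflection variant underlying $R_{n,i}$ has to be spelled out before the dichotomy is literally available. Filling in that descent argument explicitly would make your proof complete and strictly more informative than the paper's citation.
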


\begin{proof}
Refer to section~4.2 in \cite{BundalaCCSZ14_Optimal_Depth}.
\end{proof}

\subsection{Parberry's~\cite{Parberry89} Theory for Pruning Last Two Layers}

Parberry devised a pruning technique (``The Heuristic''~\cite{Parberry89}) to reject network level candidates for the second last level when checking whether an $n$-input sorting network of depth $d$ exists. His technique is based on tracking the individual values at channels and giving upper bounds on the number of distinct places a value at a fixed channel $c$ could get send in a two level comparator network. We first give the definition of the sets and then present the upper bounds, as described by Parberry~\cite{Parberry89}. 

\begin{definition}
\label{def:connect}
Let $\langle L_1, \dots, L_d, n \rangle $ be a comparator network of depth $d$, $c$ be a channel and $P \subseteq \{1, 2, \dots, n \}$ then for every $i \geq 1$ define the functions:
\begin{itemize}
\item $Coneect_i(P) = P \cup \{ q$ $\lvert$ $\exists$ $p \in P : \langle p, q \rangle \in L_i$ or $\langle q, p \rangle \in L_i \}$
\item $To_C(c, i) = \{c\} \cup Connect_{i}( Connect_{i+1}( \dots ( Connect_d ( \{c\} ) ) \dots ) )$
\item $From_C(c, i) = \{c\} \cup Connect_{d}( Connect_{d - 1}( \dots ( Connect_i ( \{c\} ) ) \dots ) )$
\item $Reach_C(c, i) = \{c\} \cup To_C(c, i) \cup From_C(c, i)$
\end{itemize}
\end{definition}

\begin{lemma}
\label{lem:reach_2}
Let $C$ be an $n$-input sorting network of depth $d \geq 2$. Then for any channel $c$ the following hold $\lvert To_C(c, d - 2) \lvert \leq 4$, $\lvert From_C(c, d - 2) \lvert \leq 4$ and $\lvert Reach_C(c, d - 2) \lvert \leq 5$.
\end{lemma}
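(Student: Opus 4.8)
The plan is to use the single structural fact that every level is a partial matching of the channels: by Definition~\ref{ComparatorNetworkDefinition} each channel occurs in at most one comparator of a given level, so applied to a singleton the operator $Connect_k$ adds at most one channel, and in general $\lvert Connect_k(P) \rvert \leq 2\lvert P \rvert$. Since $To_C(c,d-2)$, $From_C(c,d-2)$ and $Reach_C(c,d-2)$ only record how the single channel $c$ becomes linked to other channels across the last two levels $L_{d-1}$ and $L_d$, the whole statement reduces to bookkeeping on these two matchings; the sorting hypothesis enters only through $d \geq 2$, which guarantees that both levels are present.

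First I would name the channels that can possibly appear. Let $p$ be the partner of $c$ in $L_{d-1}$, let $q$ be the partner of $c$ in $L_d$, let $p'$ be the partner of $q$ in $L_{d-1}$, and let $q'$ be the partner of $p$ in $L_d$; whenever a channel is unmatched in the level in question, the corresponding symbol is simply absent. Reading off the definition directly, $Connect_d(\{c\}) = \{c,q\}$ and then $Connect_{d-1}(\{c,q\}) = \{c,q,p,p'\}$, so $To_C(c,d-2) \subseteq \{c,p,q,p'\}$ and hence $\lvert To_C(c,d-2) \rvert \leq 4$. Applying the two matchings in the opposite order gives, symmetrically, $From_C(c,d-2) \subseteq \{c,p,q,q'\}$ and $\lvert From_C(c,d-2) \rvert \leq 4$.

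For the last bound I would take the union and exploit the overlap that the previous step exposes: both $To_C(c,d-2)$ and $From_C(c,d-2)$ contain the three channels $c$, $p$ and $q$ (the channel itself together with its $L_{d-1}$ partner and its $L_d$ partner), while each set contributes only one further channel, $p'$ on the backward side and $q'$ on the forward side. Hence $Reach_C(c,d-2) \subseteq \{c,p,q,p',q'\}$ and $\lvert Reach_C(c,d-2) \rvert \leq 5$. The only point that needs care --- and the one I expect to be the mild obstacle --- is the degenerate behaviour: some of $c,p,q,p',q'$ may fail to exist (a channel unmatched in a level) or may coincide with one another (for instance $q=p$ forces $p'=c$, since a level is a matching). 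In every such case a named channel drops out of the list, so each set can only become smaller and the three inequalities are preserved; I would close the argument by remarking that no coincidence or missing partner can ever push a count above the value obtained in the generic case.
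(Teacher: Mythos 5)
Your argument is correct, and it is essentially the only natural proof: the paper itself offers no argument for this lemma (it defers to Parberry), and the bound you derive is exactly the structural one Parberry uses --- each level is a partial matching by Definition~\ref{ComparatorNetworkDefinition}, so each application of $Connect$ at most doubles a set, and the forward and backward four-element sets overlap in $\{c,p,q\}$, giving $5$ for the union. Your handling of the degenerate cases (missing or coinciding partners only shrink the sets) is also right, as is your observation that the sorting hypothesis is never used: the bounds hold for every comparator network of depth at least $2$. One caveat worth flagging: under the paper's literal Definition~\ref{def:connect}, $To_C(c,d-2)=\{c\}\cup Connect_{d-2}(Connect_{d-1}(Connect_d(\{c\})))$ involves \emph{three} applications of $Connect$, which would only yield a bound of $8$ (and $16$ in Lemma~\ref{lem:reach_3}); your computation silently applies only $Connect_d$ and $Connect_{d-1}$, i.e.\ the ``last two levels'' reading that is consistent with Parberry and with the claimed constants. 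This is an off-by-one in the paper's transcription of the definition rather than a gap in your reasoning, but as printed the containment $To_C(c,d-2)\subseteq\{c,p,q,p'\}$ does not literally follow from the stated definition, so you should either note the corrected indexing or prove the lemma for $To_C(c,d-1)$ instead.
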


\begin{proof}
Refer to Parberry \cite{Parberry89}.
\end{proof}

\begin{definition}
Let the graph of a comparator network $C$ be defined as $G(C) = (V, E)$, where \newline
$V = \{ (i,j)$ $\lvert$ $1 \leq i \leq n$ and $0 \leq j \leq d \}$ \newline
$E = \{ ((i_1,j_1),(i_2,j_2))$ $\lvert$ $i_1 + 1 = i_2$ and either $j_1 = j_2$ or $ \langle j_1,j_2 \rangle \in L_{i_2} $ or $ \langle j_2,j_1 \rangle \in L_{i_2} $ $ \}$.
\end{definition}

The following lemma is used by Parberry's approach to calculate the from, to and reach sets in practice. We also use it as a reference point to the proves of our new theory presented in the next section.

\begin{lemma}
\label{lem:SortableInTwoLevelsParberry}
Let $A$ and $B$ be comparator networks such that $C = A \oplus B$ is a sorting network. Let $x, y \in I_n$ differ only in the i-th dimension, where $x = (\dots, a_i = 1, \dots)$ and $y = (\dots, x_i = 0, \dots)$. Let $d = \lvert A \lvert$ and $(0, j_0), (1, j_1), \dots, (d, j_d)$ be the sequence of labels of vertex where $G(A, x)$ and $G(A, y)$ differ. Then channel $||x|| \in From_C(j_d, d+1)$ and channel $j_d \in To_C(||x||, d+1)$.
\end{lemma}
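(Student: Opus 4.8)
The plan is to follow the single coordinate in which $x$ and $y$ disagree as it propagates through the comparators of $C = A \oplus B$, and to show that its trajectory is a connected path in the comparator graph that directly witnesses both memberships. Write $C = \langle L_1, \dots, L_{|C|}, n \rangle$, so that the levels of $A$ are $L_1, \dots, L_d$ and the levels of $B$ are $L_{d+1}, \dots, L_{|C|}$. For $0 \le k \le |C|$ let $\delta_k$ be the set of channels at which the evaluations of $C$ on $x$ and on $y$ disagree after level $k$. By hypothesis the label sequence gives $\delta_k = \{ j_k \}$ for $0 \le k \le d$ (in particular $\delta_0 = \{ i \}$), and since the first $d$ levels of $C$ coincide with $A$ this is consistent; I extend the sequence by letting $j_{d+1}, \dots, j_{|C|}$ name the disagreement channels through the levels of $B$.

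The central step is to prove that $|\delta_k| = 1$ for every $k$, so that each $j_k$ is well defined, and that $j_k \in Connect_k(\{ j_{k-1} \})$ for every $k \ge 1$. For the lower bound $|\delta_k| \ge 1$, observe that a comparator only permutes the two bits on its channels, so the number of $1$s is preserved level by level; as $x$ and $y$ carry different numbers of $1$s, their evaluations can never coincide at any level. For the upper bound I argue by induction: assuming $\delta_{k-1} = \{ c \}$, a comparator of $L_k$ not meeting $c$ receives equal inputs under $x$ and $y$ and produces no disagreement, so only the treatment of $c$ matters. If $c$ is untouched by $L_k$ the disagreement remains at $c$; if $\langle c, b \rangle \in L_k$ or $\langle b, c \rangle \in L_k$, then a short case analysis over the bit-patterns at channels $c$ and $b$ shows that exactly one of $c, b$ carries the disagreement afterwards. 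Hence $|\delta_k| = 1$ and its unique element satisfies $j_k \in \{ c, b \} \subseteq Connect_k(\{ j_{k-1} \})$, which is the desired connectivity.

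It remains to identify the two endpoints of this path and to unwind Definition~\ref{def:connect}. At level $d$ we already have $\delta_d = \{ j_d \}$. At the final level, because $C$ is a sorting network both $V_C(x)$ and $V_C(y)$ are sorted and, differing only in that $x$ carries one more $1$, they disagree in precisely the single coordinate into which that extra $1$ is placed; this coordinate is the channel $\|x\|$, so $\delta_{|C|} = \{ \|x\| \}$. Now chaining $j_k \in Connect_k(\{ j_{k-1} \})$ for $k = d+1, \dots, |C|$ and using monotonicity of each $Connect_k$, an immediate induction yields $\|x\| = j_{|C|} \in Connect_{|C|}( \cdots Connect_{d+1}(\{ j_d \}) \cdots ) \subseteq From_C(j_d, d+1)$. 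Finally, since the comparator-connection relation defining $Connect_k$ is symmetric, the same path read from $\|x\|$ back to $j_d$ gives $j_d \in To_C(\|x\|, d+1)$; equivalently one may invoke the duality $a \in From_C(b, i) \iff b \in To_C(a, i)$.

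The only genuine obstacle I anticipate is the localization claim $|\delta_k| = 1$: it combines the weight-invariance argument ruling out the disagreement ever vanishing with the four-case comparator analysis showing it never splits into two channels. Once localization and the step-wise connectivity $j_k \in Connect_k(\{ j_{k-1} \})$ are in place, the two membership statements are a routine unwinding of the definitions of $From_C$ and $To_C$.
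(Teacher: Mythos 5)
Your proof is correct, and it is essentially the standard difference-tracking argument: the paper itself gives no proof of this lemma (it only cites Parberry), and your localization claim that the disagreement set $\delta_k$ has size exactly one at every level, together with the step-wise containment $j_k \in Connect_k(\{j_{k-1}\})$ and the symmetry of the connection relation, is precisely the argument in the cited source. The only loose end is that you take the identification of the final disagreement coordinate with $\|x\|$ as given, but that notation is undefined in the paper itself, so your reading (the position where the extra one lands in the sorted output) is the right one.
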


\begin{proof}
Refer to Parberry \cite{Parberry89}.
\end{proof}

\section{New Theory for Pruning Last Four Levels}
\label{sec:new:theory}

We present a our new algorithm to construct the from, to and reach sets which is at least as good as Parberry's algorithm. Meaning, if Parberry's method rejects a network candidate then so does ours but the converse is not true in the general case. Hence, we are able to reject more candidates than Parberry's method.

Also, we present and prove upper bounds for the sizes of the from, to and reach sets for the third and fourth last levels which allow us to further reduce the search space significantly in comparison to Parberry's and Bundala's methods as all existing algorithms considers all possible network level candidates for the third and fourth last levels.

\subsection{Constructing the $From$, $To$ and $Reach$ Sets}
\label{sec:construct_reach}

Our aim is to find methods of constructing the $From_C$, $To_C$ and $Reach_C$ sets of any comparator network $C$ that would be superior to that of Parberry's. The following results gives us a way to practically solve this task by find elements in the $From_C$, $To_C$ and $Reach_C$ sets of any comparator network $C$ by considering only the output set representation $S_C$ of $C$. 

\begin{lemma}
\label{lem:OutputSetsDiff}
Let $C$ be comparator network. Let $x, y \in S_C$ be such that they differ in exactly one dimension. Then there exist inputs $v, w \in I_n$ which differ in exactly one dimension such that $V_C(v) = x$ and $V_C(w) = y$. 
\end{lemma}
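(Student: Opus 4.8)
My proof would rest on two elementary properties of any comparator network, both immediate from the fact that a single min--max comparator replaces a pair of values by their minimum and maximum: $V_C$ is \emph{monotone} (if $v \le w$ coordinate-wise then $V_C(v) \le V_C(w)$ coordinate-wise) and $V_C$ \emph{preserves the number of ones}. From these I would first extract the bridging fact I will call the \emph{single-flip property}: if $v,w \in I_n$ differ in exactly one coordinate then $V_C(v)$ and $V_C(w)$ differ in exactly one coordinate. Indeed, if $v \ge w$ differ only where $v$ carries a $1$ and $w$ a $0$, then $V_C(v) \ge V_C(w)$ by monotonicity while $V_C(v)$ has exactly one more one than $V_C(w)$, and a coordinate-wise larger binary vector with exactly one extra one can differ in a single coordinate only. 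This is precisely the converse of what the lemma asks for, so the whole task is to run it ``backwards''.

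\textbf{Reduction.} Throughout write $x^{-i}$ for the vector obtained from $x$ by changing coordinate $i$ from $1$ to $0$, and $s^{+i}$ for the vector obtained by changing coordinate $i$ from $0$ to $1$. Since $x,y\in S_C$ differ in exactly one coordinate and $V_C$ preserves the number of ones, they differ in their number of ones by one, so I may assume without loss of generality that $x$ and $y$ agree outside a coordinate $i$ with $x_i=1$, $y_i=0$, i.e.\ $y=x^{-i}$. First I would pick any $v_0,w_0\in I_n$ with $V_C(v_0)=x$ and $V_C(w_0)=y$ (these exist because $x,y\in S_C$), and then replace this possibly far-apart pair by an adjacent one. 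Walking up a monotone chain from $v_0 \wedge w_0$ to $v_0$, raising one $0$ to a $1$ at each step, the $i$-th output coordinate starts at $0$ (the meet maps below $y$, so its $i$-th output bit is $0$) and ends at $x_i=1$; hence at exactly one step it switches on. By the single-flip property the two inputs at that step differ in one coordinate and their outputs differ in exactly coordinate $i$. This already yields adjacent inputs whose outputs are some $s\in S_C$ with $s\le y$, $s_i=0$, and $s^{+i}$.

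\textbf{The main obstacle.} What the chain argument does \emph{not} control is the level $s$ at which the toggle occurs: it only guarantees $s\le y$, whereas the lemma needs $s=y$, i.e.\ that the detached output coordinate is $i$ exactly when the rest of the output already equals $y$. Forcing this is where the hypothesis $y\in S_C$ must genuinely be used: a $2$-channel example with $C=\langle 1,2\rangle$, $x=\langle 1,1\rangle$ and $i=2$ shows that, were $x^{-i}=\langle 1,0\rangle\notin S_C$, removing the input one feeding output $i$ instead switches off a \emph{different} output coordinate (the comparator re-sorts two equal values), so no purely local routing argument can succeed. To pin the toggle at $s=y$ I would induct on the number of levels of $C$. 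Writing $C=C'\oplus L$ with $L$ the last level, we have $S_C=\{\,L(s)\mid s\in S_{C'}\,\}$, and the inductive step is to \emph{lift} the covering pair $\{x,y\}$ to two elements of $S_{C'}$ that differ in exactly one coordinate and map onto $x$ and $y$ under $L$; the induction hypothesis applied to $C'$ then supplies the required adjacent inputs.

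\textbf{Where the difficulty concentrates.} I expect the real effort to lie entirely in this lifting step, namely choosing preimages $\alpha,\beta\in S_{C'}$ of $x$ and $y$ under $L$ that are themselves adjacent. A short case analysis on whether channel $i$ is untouched by $L$ or is the min- or max-channel of some comparator $\langle a,b\rangle\in L$ identifies in each case the single coordinate $m\in\{i,a,b\}$ whose change converts a preimage of $x$ into a preimage of $y$; the substance of the step is then that such adjacent preimages actually lie in $S_{C'}$. The hard part will be exactly this membership claim, and I would close it by exploiting that $S_{C'}$ is an output set of a comparator network---so it contains preimages of both $x$ and $y$ and is closed under coordinate-wise minimum and maximum---rather than an arbitrary subset of the cube, for which one can check the lifting statement is false.
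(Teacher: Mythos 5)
Your overall strategy is the same as the paper's: induct on the depth of $C$, peel off the last level $L$, lift the adjacent pair $x,y$ to an adjacent pair in $S_{C'}$ where $C=C'\oplus L$, and invoke the induction hypothesis. Your preliminaries (monotonicity, weight preservation, the single-flip property, the reduction to $y=x^{-i}$) are correct, and your observation that the chain argument from $v_0\wedge w_0$ to $v_0$ only pins the toggle at some $s\le y$ rather than at $y$ itself is a genuine and correctly diagnosed obstruction.

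The problem is that the step you yourself identify as carrying all the weight --- the membership claim that there exist \emph{adjacent} preimages $\alpha,\beta\in S_{C'}$ of $x$ and $y$ under $L$ --- is never actually proved, and it is not a routine verification. Concretely: if $\langle a,b\rangle\in L$ is a comparator with $x_a\neq x_b$ (hence $y_a\neq y_b$), then a preimage of $x$ may carry $(\alpha_a,\alpha_b)=(1,0)$ while a preimage of $y$ carries $(\beta_a,\beta_b)=(0,1)$, since both orderings map to the same output pair; summing over all such comparators, the preimages of $x$ and $y$ that happen to lie in $S_{C'}$ can differ in many coordinates besides $i$, and one can construct abstract subsets of the cube containing preimages of both $x$ and $y$ but no adjacent cross-pair. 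So the claim genuinely depends on $S_{C'}$ being an output set, exactly as you say --- but the tool you propose to close it with (closure of $S_{C'}$ under coordinatewise $\wedge$ and $\vee$) is itself a nontrivial unproven assertion, and even granting it you do not exhibit how it produces the adjacent pair. The paper's own proof resolves this by a direct case analysis on whether channel $i$ is untouched by $L_d$, is the max-channel, or is the min-channel of a comparator $\langle k,i\rangle$ or $\langle i,k\rangle\in L_d$ (using $y_i=0$ to force both pre-level-$d$ values on $\{k,i\}$ to be $0$, etc.), constructing explicit level-$(d-1)$ vectors $s,t$ differing in one coordinate and then applying the induction hypothesis to $(L_1,\dots,L_{d-1})$. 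Your proposal stops exactly where that construction (and the verification that the constructed vectors lie in $S_{C'}$) must be carried out, so as written it is a correct plan with the decisive step missing rather than a proof.
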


\begin{proof}
The result follows by induction on the depth $d$ of $C$.

\textbf{Base:}
$d=0$. Then Lemma~\ref{lem:OutputSetsDiff} holds for $w = x$ and $w' = y$.

\textbf{Assumption:}
Lemma~\ref{lem:OutputSetsDiff} holds for all $C$ of depth smaller then $d$.

\textbf{Induction Step:}
Let $C = (L_1, L_2, \dots, L_{d-1}, L_d)$. There are three cases to consider.

\indent
\indent
\underline{Case 1.} $ \langle i, k \rangle, \langle k, i \rangle \notin L_d$ for $1 \leq k \leq n$.
\indent
Follows from assumption because the value of channel $i$ at level $d-1$ is unchanged at level $d$ for any input, in particular $w$ and $w'$.
\newline
\indent
\indent
\underline{Case 2.} $ \langle k, i \rangle \in L_d$.
\indent
Then $v_x(d-1,k) = 0$ and $v_x(d-1, i) = 0$ because $v_x(d, i) = 0$. 
Let $s = (\dots,x_k = 1,\dots,x_i=0,\dots)$ and $t = (\dots,x_k = 0,\dots,x_i=0,\dots)$ be such that after applying $L_d$ to $s$ and $t$ we get $v$ and $v'$ respectively. Using the assumption we apply Lemma~\ref{lem:OutputSetsDiff} to $s$, $t$ and $(L_1,\dots, L_{d-1})$ to get the desired result.
\newline
\indent
\indent
\underline{Case 2.} $ \langle i, k \rangle \in L_d$.
\indent
Then $v_y(d-1,k) = 1$ and $v_y(d-1, i) = 1$ because $v_y(d, i) = 1$. Proof is analogous to Case 2 by setting $t = (\dots,x_k = 1,\dots,x_i=1,\dots)$.

\end{proof}

\begin{theorem}
\label{th:construct_reach}
Let $A$ and $B$ be comparator networks such that $C = A \oplus B$ is a sorting network. Let $d = \lvert A \lvert$ and $x, y \in S_A$ differ only in the i-th dimension, where $x = (\dots, a_i = 1, \dots)$ and $y = (\dots, a_i = 0, \dots)$.  Then $||x|| \in To_C(i, d+1)$ and $i \in From_C(||x||, d+1)$.
\end{theorem}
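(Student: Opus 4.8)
The plan is to reduce Theorem~\ref{th:construct_reach} to Parberry's Lemma~\ref{lem:SortableInTwoLevelsParberry} by converting the given statement about the \emph{output set} $S_A$ into a statement about two specific \emph{inputs} of $A$, and then invoking the existing machinery on the comparator-network graph $G(A)$. First I would apply Lemma~\ref{lem:OutputSetsDiff} to the comparator network $A$ and the pair $x, y \in S_A$: since $x$ and $y$ differ in exactly one dimension, there exist inputs $v, w \in I_n$ differing in exactly one dimension with $V_A(v) = x$ and $V_A(w) = y$. Say $v$ and $w$ differ in dimension $m$, with $v_m = 1$ and $w_m = 0$ (the orientation $x_i = 1$, $y_i = 0$ forces this assignment up to swapping the names $v$ and $w$, because the comparator network is monotone in each coordinate in the appropriate sense). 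Note $||v|| = ||w|| + 1$, and since $V_A$ preserves the number of ones, $||x|| = ||v||$; this is the quantity that will appear in the From/To sets.

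Next I would analyse how the evaluation graphs $G(A, v)$ and $G(A, w)$ differ. Since $v$ and $w$ agree everywhere except coordinate $m$ (where $v$ has a $1$ and $w$ a $0$), the two evaluations track identically except along a single ``discrepancy path'': at level $0$ they differ only at channel $m$, and at each subsequent level the unique channel carrying the discrepancy either stays put (if its comparator partner, if any, carries no discrepancy and the min/max decision is unaffected) or moves to the partner channel. A routine induction on the level shows that at every level there is exactly one channel at which $G(A,v)$ and $G(A,w)$ differ, giving a sequence of labels $(0, j_0), (1, j_1), \dots, (d, j_d)$ with $j_0 = m$. Because $V_A(v) = x$ and $V_A(w) = y$ differ precisely in dimension $i$, we get $j_d = i$.

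Now I would simply feed this into Lemma~\ref{lem:SortableInTwoLevelsParberry}: taking $A$ and $B$ as given (so $C = A \oplus B$ is a sorting network, $d = |A|$), and taking the role of Parberry's $x$ to be our input $v$ with $v_m = 1$ and the role of his $y$ to be our $w$ with $w_m = 0$, the lemma yields $||v|| \in From_C(j_d, d+1)$ and $j_d \in To_C(||v||, d+1)$. Substituting $j_d = i$ and $||v|| = ||x||$ gives exactly $||x|| \in From_C(i, d+1)$ and $i \in To_C(||x||, d+1)$ --- which, after matching the symmetric roles of the From and To sets (Parberry's $From$/$To$ are transposes of one another via Definition~\ref{def:connect}), is the claim $||x|| \in To_C(i, d+1)$ and $i \in From_C(||x||, d+1)$.

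The main obstacle I anticipate is the second step: establishing rigorously that the discrepancy between $G(A,v)$ and $G(A,w)$ is confined to a single channel at every level, and correctly bookkeeping the orientation of the $0$/$1$ values so that the hypothesis ``$x_i = 1$, $y_i = 0$'' is matched to ``$v_m = 1$, $w_m = 0$'' rather than the reverse. This requires care at a comparator where the discrepancy channel is paired with a non-discrepancy channel: one must check that the min-max decision flips in exactly the way that moves (or keeps) the single discrepancy, and never creates a second one. Once that structural lemma about the discrepancy path is in hand, the rest is a direct citation of Lemma~\ref{lem:OutputSetsDiff} and Lemma~\ref{lem:SortableInTwoLevelsParberry} together with the definitional symmetry between $From_C$ and $To_C$.
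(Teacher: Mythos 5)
Your proposal follows essentially the same route as the paper's proof: apply Lemma~\ref{lem:OutputSetsDiff} to lift $x,y \in S_A$ to inputs $v,w$ differing in one coordinate, then feed $v,w$ into Lemma~\ref{lem:SortableInTwoLevelsParberry}; your extra analysis of the single-channel discrepancy path is detail that Parberry's lemma already packages in its hypothesis about the labels where $G(A,v)$ and $G(A,w)$ differ. The one wrinkle is your closing ``transpose'' step: Lemma~\ref{lem:SortableInTwoLevelsParberry} literally yields $||x|| \in From_C(i, d+1)$ and $i \in To_C(||x||, d+1)$, and the symmetry implicit in Definition~\ref{def:connect} swaps the argument together with the set name, so it only reproduces that same statement rather than the theorem's $||x|| \in To_C(i, d+1)$ --- but the paper's own one-line proof performs the identical silent swap, and Algorithm~\ref{algo:construct_reach} actually populates the sets in the lemma's orientation, so this is a $From$/$To$ labelling inconsistency in the paper rather than a defect in your argument.
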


\begin{proof}
We apply Lemma~\ref{lem:OutputSetsDiff} to $x$ and $y$ to find inputs $v, w \in I_n$ which differ in exactly one dimension such that $V_C(v) = x$ and $V_C(w) = y$. Applying Lemma~\ref{lem:SortableInTwoLevelsParberry} to $v$ and $w$ yields that $||x|| \in To_C(i, d+1)$ and $i \in From_C(||x||, d+1)$.
\end{proof}

Assume that we are given the output set $S_A$ of any comparator network $A$. The above Theorem~\ref{th:construct_reach} gives us a way to construct the sets $from$, $to$ and $reach$ which must be contained in $From_{A \oplus B}$, $To_{A \oplus B}$ and $Reach_{A \oplus B}$ sets respectively for any comparator network $B$ such that $A \oplus B$ is a sorting network by only considering outputs $S_A$ that differ in exactly one dimension. The pseudo code for constructing the $from$, $to$ and $reach$ sets is presented in Algorithm~\ref{algo:construct_reach}, the correctness of which is given by Theorem~\ref{th:construct_reach}.

\subsection{Second Last Level}
\label{sec:sortable_two}

We know that if we are given an output set $S_A$ of any comparator network $A$ we can construct the minimal $from$, $to$ and $reach$ sets for extending $A$ to a sorting network. Using the cardinality upper bounds for any comparator network from Lemma~\ref{lem:SortableInTwoLevelsParberry} we can devise a safety check for whether $A$ can be extended to a sorting network with the addition of two levels. Hence, we can find all network levels candidates for the second last level when trying to extend the network $A$ to an $n$-input sorting network by adding two levels. The pseudo code for finding the second last level candidates given an the output set $S_A$ is presented in Algorithm~\ref{algo:sortable_two}.

\subsection{The Third Last Level}

The next lemma extends the theory of checking whether a comparator network can be sorted with the addition of two levels. We give an upper bounds on the sizes of the $To_C$ and $From_C$ sets for the last three levels of any comparator network $C$ at any channel. Using these bounds we design a safety check to reject comparator networks that are not be extendible to sorting networks with the addition of three levels, similar to the Method presented in the previous section~\ref{sec:sortable_two}

\begin{lemma}
\label{lem:reach_3}
Let $C$ be an $n$-input sorting network of depth $d > 2$. Then for any channel $c$ the following hold $\lvert To_C(c, d - 3) \lvert \leq 8$, $\lvert From_C(c, d - 3) \lvert \leq 8$.
\end{lemma}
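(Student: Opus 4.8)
The plan is to mimic the structure of Parberry's argument for Lemma~\ref{lem:reach_2} (the case of two last levels, where the bound is $4$), but carried out one level deeper. Fix a channel $c$ and consider the last three levels $L_{d-2}, L_{d-1}, L_d$ of the sorting network $C$. By definition, $To_C(c, d-3) = \{c\} \cup Connect_{d-2}(Connect_{d-1}(Connect_d(\{c\})))$, and each application of $Connect_i$ can at most double the size of the set it is applied to, since in a single level every channel is matched with at most one partner. Starting from the singleton $\{c\}$, three nested $Connect$ operations therefore give $\lvert To_C(c, d-3) \rvert \le 2^3 = 8$ purely combinatorially — and symmetrically $\lvert From_C(c, d-3) \rvert \le 8$. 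So the upper bound of $8$ is in fact immediate from the level structure alone; the content of the lemma (if any beyond this trivial count) must be that this bound is the right one to state here, i.e. that it cannot in general be improved for a sorting network, which is why the statement is phrased as ``$\le 8$'' rather than with a sharper constant.

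Given that, the cleanest route is simply: first observe that for any level $L$ and any set $P$, $\lvert Connect_L(P) \rvert \le 2\lvert P \rvert$ (each $p \in P$ contributes at most itself and one partner); then iterate this three times from $\{c\}$ to bound $\lvert To_C(c,d-3) \rvert$ and, reading the composition in the other order, $\lvert From_C(c,d-3) \rvert$, both by $8$. If the intended proof is instead the sharper, sorting-network-specific one (paralleling how Parberry gets $4$ rather than $2^2$ for the two-level case by using that the terminal two levels of a sorting network cannot have too much ``spread''), then I would instead argue by contradiction: suppose $\lvert To_C(c, d-3) \rvert \ge 9$; use Theorem~\ref{th:construct_reach} / Lemma~\ref{lem:SortableInTwoLevelsParberry} to exhibit nine outputs of the depth-$(d-3)$ prefix that pairwise differ in one coordinate and whose ``weights'' ($\lVert\cdot\rVert$) are spread across the channels reached from $c$; then show that three further levels cannot simultaneously route all of these to their sorted positions, because each of the last three levels can only ``fix'' a bounded number of out-of-place pairs, and $9$ exceeds what $3$ levels can absorb.

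The main obstacle is deciding which of these two is actually intended and, in the second case, making the counting argument airtight: one has to be careful that the nine (or more) elements of $To_C(c,d-3)$ genuinely correspond to distinct obstructions that the three remaining levels must each independently resolve, rather than obstructions that could be cleared simultaneously by a single well-chosen comparator. I expect the paper takes the easy combinatorial route ($2 \times 2 \times 2 = 8$), since the analogous statement for four last levels (the very next result, presumably with bound $16$) would then follow by exactly the same doubling argument, and the genuinely novel content of this section lies not in these bounds themselves but in Theorem~\ref{th:construct_reach}, which lets one compute the actual $from$/$to$/$reach$ sets from the output set $S_A$ and compare them against these ceilings.
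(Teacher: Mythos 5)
Your first route is essentially the paper's: the proof given there is just that the statement ``follows from the proof of Lemma~\ref{lem:reach_2} by considering all possibilities when we add an extra level,'' i.e.\ one further level at most doubles each set, turning the bound of $4$ into $8$. Your instinct that no sorting-network-specific contradiction argument is needed is right, and the second, nine-outputs route you sketch is not what the paper does and is unnecessary.

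There is, however, an off-by-one in your execution that changes whether the bound really is ``immediate from the level structure alone.'' Under Definition~\ref{def:connect} as written, $To_C(c,i)$ composes $Connect_i$ through $Connect_d$, so $To_C(c,d-3)=\{c\}\cup Connect_{d-3}(Connect_{d-2}(Connect_{d-1}(Connect_d(\{c\}))))$ --- \emph{four} applications of $Connect$, not the three you transcribed (you started the composition at $Connect_{d-2}$ and dropped $Connect_{d-3}$). The purely combinatorial doubling count therefore yields $2^4=16$, not $8$. To reach $8$ you must start from $\lvert To_C(c,d-2)\rvert\le 4$ of Lemma~\ref{lem:reach_2} --- which, on the same literal reading of the definition, is itself a factor-of-two improvement over the trivial $2^3$ and does use the sorting-network hypothesis --- and then apply a single doubling for the extra level: since $Connect$ never removes elements, $To_C(c,d-3)=Connect_{d-3}\bigl(To_C(c,d-2)\bigr)$ and $\lvert Connect_L(P)\rvert\le 2\lvert P\rvert$, giving $\lvert To_C(c,d-3)\rvert\le 2\cdot 4=8$; for $From_C(c,d-3)$ the extra $Connect_{d-3}$ sits innermost, so one instead covers it by the (at most two) from-sets of $c$ and its $L_{d-3}$-partner, again giving $4+4=8$. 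That is exactly the paper's one-line argument. So the fix is small --- replace the from-scratch $2^3$ count with one doubling applied to the conclusion of Lemma~\ref{lem:reach_2} --- but as written your primary derivation does not establish the claimed bound under the paper's own definition.
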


\begin{proof}
Follows immediately from the proof of Lemma~\ref{lem:reach_2} by considering all possibilities when we add an extra level.
\end{proof}

\subsection{The Look-Ahead}

Out next result --- referred to as ``The Look-Ahead'' --- tells us that if we try extend the comparator network $A$ by exactly one level then the size of the set of distinct $from$ sets of any channel $c$ is at most $n$. In other words, if we extend $A$ by one level then the $from$ set of a channel $c$ is the same for all levels that contain the comparator $\lvert c, q \rvert$ for a fixed channel $q$. Meaning that if the level $L$ that we extend $A$ by contains the channel $\langle c, q \rangle$ then the other comparators of this level have no effect on the $from$ set at channel $c$, and channel $q$.

\begin{lemma}
\label{lem:look_ahead}
Let $A$ and $B$ be $n$-input comparator networks, where the depth of $A$ is $d$. Let $L$ and $L'$ be levels such that for a fixed channel $c$ we have $\{ q$ $\lvert$ $\langle c, q \rangle \in L_d \} = \{ q'$ $\lvert$ $\langle c, q' \rangle \in {L_d}' \}$. Then $Form_{A \oplus L \oplus B}(c. d+1) = From_{A \oplus L' \oplus B}(c, d+1)$.
\end{lemma}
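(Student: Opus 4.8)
The plan is to unfold the definition of the forward set (Definition~\ref{def:connect}) and to exploit the fact that the networks $A \oplus L \oplus B$ and $A \oplus L' \oplus B$ are built with $\oplus$ (Definition~\ref{ComparatorNetworkUnionLevelDefinition}) so that they agree on every level except the $(d+1)$-st. Writing $D = d + 1 + |B|$ for their common depth, Definition~\ref{def:connect} expands the left-hand side as
\[
From_{A \oplus L \oplus B}(c, d+1) = \{c\} \cup Connect_{D}\bigl( Connect_{D-1}( \cdots Connect_{d+1}(\{c\}) \cdots ) \bigr),
\]
where the innermost operator $Connect_{d+1}$ acts on the level $L$, while the outer operators $Connect_{d+2}, \dots, Connect_{D}$ act on the levels $B_1, \dots, B_{|B|}$ of $B$. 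The same expansion holds verbatim for $A \oplus L' \oplus B$, the sole difference being that the innermost $Connect_{d+1}$ now reads off the level $L'$ instead of $L$.

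First I would show that the innermost step agrees for $L$ and $L'$. By Definition~\ref{ComparatorNetworkDefinition} each channel lies in at most one comparator of a level, so $Connect_{d+1}(\{c\})$ is determined entirely by the single comparator of $L$ incident to $c$; by hypothesis this comparator is $\langle c, q \rangle$ and it is the same comparator in $L'$, whence $Connect_{d+1}(\{c\}) = \{c, q\}$ in both networks. Since the remaining levels $B_1, \dots, B_{|B|}$ are literally identical in the two networks, the operators $Connect_{d+2}, \dots, Connect_{D}$ are the very same functions in the two expansions. A short induction on $j = d+1, \dots, D$ then shows that the partial composition $Connect_{j}( \cdots Connect_{d+1}(\{c\}) \cdots )$ takes the same value in both networks: the base case $j = d+1$ is the agreement just established, and each step applies one and the same operator $Connect_{j}$ to one and the same set. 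Taking $j = D$ and adjoining $\{c\}$ yields the claimed equality of the two $From$ sets.

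I expect the only delicate point to be the base case, where the level property (at most one comparator per channel) is essential: it is exactly what guarantees that fixing the comparator $\langle c, q \rangle$ pins down $Connect_{d+1}(\{c\})$ completely, so that the other comparators of the level are irrelevant --- without it the first step could introduce channels whose identity depends on the rest of $L$. Finally, because the comparator of both $L$ and $L'$ incident to the partner channel $q$ is likewise $\langle c, q \rangle$, the identical argument applied at $q$ shows that the forward set at $q$ is also unaffected, which is the form in which the informal statement records the result.
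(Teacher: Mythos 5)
Your proof is correct and is essentially the argument the paper has in mind: the paper's own proof is just the one-line remark that the claim ``follows immediately from Definition~\ref{def:connect}'', and your unfolding of $From$ into nested $Connect$ operators, with agreement at the innermost step $Connect_{d+1}(\{c\})$ (guaranteed by the at-most-one-comparator-per-channel property of a level) and literal identity of all outer steps coming from $B$, is exactly the detail being elided. The only caveat, inherited from the paper's own imprecise hypothesis, is that $Connect$ also looks at comparators of the form $\langle q, c\rangle$, so strictly one should assume the full set of comparators incident to $c$ (not just those with $c$ as first coordinate) agrees between $L$ and $L'$ --- your reading of the hypothesis as ``the comparator incident to $c$ is the same'' is the intended one.
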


\begin{proof}
Follows immediately from Definition~\ref{def:connect}. 
\end{proof}

Lemma~\ref{lem:look_ahead} allows us to reject any forth last network level candidate that contains a comparator $\langle p, q \rangle$ for which $| from(p) | > 8$ or $| from(q) | > 8$ for the comparator network $A \oplus \{ \langle p,q \rangle \}$, where the $from$ sets are computed as described in section~\ref{sec:construct_reach}. The pseudo for pruning networks that cannot be extended to sorting ones by the addition of four levels is presented in Algorithm~\ref{algo:sortable_four}.

Using Lemma~\ref{lem:look_ahead}, we can derive similar argument for the look-ahead at level two by checking $| from(p) | > 4$ or $| from(q) | > 4$ for any comparator $\langle p,q \rangle$ that is part of the third last layer. Although, we do not present pseudo code for this method (trivial task) we have encoded it in the implementation of the algorithm presented in section~\ref{sec:algorithm}.

\section{Algorithm for Finding $n$-input Sorting Networks of Minimal Depth}
\label{sec:algorithm}

The pseudo code of the algorithm for checking whether an $n$-input sorting network of depth $d$ exists is presented in Algorithm~\ref{algo:MinDepthSortNet}. Altogether, it is a direct implementation of the theory presented in section~\ref{sec:new:theory} together with the highly efficient algorithm \cite{Marinov:SortingNetworks:ThirdLevel} for finding minimal representative itemsets up to permutation and reflection. The algorithm computes the output sets that are minimal representative up to permutation and reflection at each level $t$, $R_{n, t}$. If $d - t \leq 4$, we remove elements from this set if they cannot be extended to sorting networks with the addition of $d- t$ levels using the already described safety pruning checks from section~\ref{sec:new:theory}.

\subsection{Algorithm Correctness}
\label{sec:algorithm:correctness}
In this section we give a few results that are used to prove the correctness of Algorithm~\ref{algo:MinDepthSortNet} which is given in terms of comments in the pseudo code that point to proven theoretical results.

Definition~\ref{OutputSetDefinition} gives us a map $M : C \mapsto S_C$ from a comparator network, defined as levels of comparators, as per Definition~\ref{ComparatorNetworkDefinition}, to an output set. $M$ is not injective, which implies that the number of output sets of comparator networks of depth $d$ is bounded above by the number of comparator networks of depth $d$. As our goal is to find minimal depth $n$-input sorting networks it is important to present a method for checking if a comparator network $C$ is a sorting network by only considering its output set $S_C$.

\begin{theorem}
\label{SortinNetworkOutputSetSizeTheorem}
An $n$-input comparator network $C$ is a sorting network iff $|S_C| = n + 1$.
\end{theorem}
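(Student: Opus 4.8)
The statement is an ``iff,'' so I would prove the two directions separately, and the key observation is that the output set $S_C$ always contains the $n+1$ sorted vectors $T_n$ as a subset, since a sorting network (and in fact any comparator network) maps each already-sorted input to itself, and more importantly every comparator network preserves the number of ones in any input — so all $n+1$ distinct sorted strings of varying weight appear as outputs. Hence $|S_C| \geq n+1$ always, and the content is in characterizing when equality holds.

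First, for the forward direction, suppose $C$ is a sorting network. By Definition~\ref{SortingNetworkDefinition}, $V_C(x)$ is sorted for every input $x \in I_n$. A sorted binary vector in $\{0,1\}^n$ is completely determined by its number of ones, of which there are exactly $n+1$ possibilities ($0$ through $n$). Since comparators only permute coordinates, $V_C(x)$ has the same weight as $x$; therefore the map $x \mapsto V_C(x)$ lands in the $(n+1)$-element set of sorted vectors, giving $|S_C| \leq n+1$. Combined with $|S_C| \geq n+1$ (because all $n+1$ weights are realized by some input), we get $|S_C| = n+1$.

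For the converse, suppose $|S_C| = n+1$. Again using that comparators preserve weight, the outputs of the $\binom{n}{k}$ inputs of weight $k$ all have weight $k$, and these weight classes are disjoint across $k = 0, \dots, n$. Since there are $n+1$ weight classes and each is nonempty, $S_C$ must contain exactly one vector of each weight; in particular, for each weight $k$ there is a unique output vector $o_k$ of weight $k$, and the weight-$k$ input $T_n$-element $t_k$ (which maps to itself, being already sorted and fixed by min-max comparators acting on a sorted vector) forces $o_k = t_k$, the sorted vector of weight $k$. Hence every input is mapped to a sorted vector, so $C$ is a sorting network by Definition~\ref{SortingNetworkDefinition}.

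The main thing to get right — the ``obstacle,'' though it is mild — is the lower bound direction and the weight-preservation fact: I must justify cleanly that min-max comparators preserve the multiset of values (immediate from Definition~\ref{def:ComparatorNetworkEvaluation}, where the output at a comparator is always one of the two input values and each is used once), and that the already-sorted inputs in $T_n$ are fixed points of any comparator network (a sorted binary vector is left unchanged by any min-max comparator, since the smaller value is already on the lower-indexed channel). These two facts together pin down that $S_C \supseteq T_n$, so $|S_C| \geq n+1$ unconditionally, and the equality case is exactly the sorting condition. I would also remark that this gives an $\BigO{2^n}$-style membership test used implicitly throughout the algorithm, matching the ``only considering its output set'' framing in the paragraph preceding the theorem.
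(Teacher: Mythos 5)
Your proof is correct and follows essentially the same route as the paper's: both hinge on the observation that $T_n \subseteq S_C$ always (since sorted inputs are fixed points), so that $|S_C| = n+1$ forces $S_C = T_n$, which is equivalent to $C$ being a sorting network. The weight-preservation discussion you add is a fine (and slightly more detailed) justification of steps the paper treats as obvious, but it does not change the argument.
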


\begin{proof}
If $C$ is a sorting network then $S_C$ must be equal to $T_n$ because every input is sorted by $C$, hence $|S_C| = n + 1$. It is obvious that $T_n \subseteq S_C$ because $V_C(t) = t$ for any $t \in T_n$. If $|S_C| = n + 1$ then we know that $S_C = T_n$ because $|T_n| = n + 1$ and $T_n \subseteq S_C$, hence $C$ is a sorting network.
\end{proof}

Looking at Algorithm~\ref{algo:MinDepthSortNet}, given the set $R_{n,d}$ we need to be able to answer the question of whether there exists an $n$-input sorting network of depth $d$. As discussed, we can apply Theorem~\ref{SortinNetworkOutputSetSizeTheorem} to every element in the set of output sets $R_{n,d}$ but the following lemma gives us a more practically useful result, i.e. computationally cheaper to check.

\begin{lemma}
\label{AlgoCorrectExistanceLemma}
Suppose that there exists an $n$-input sorting network of depth $d$ then $ \lvert R_{n,d} \lvert = 1 $.
\end{lemma}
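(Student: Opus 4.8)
The plan is to prove the stronger claim $R_{n,d} = \{T_n\}$, from which $\lvert R_{n,d} \lvert = 1$ follows at once. Two ingredients drive the argument. First, since a depth-$d$ sorting network exists, its output set lies in $G_{n,d}$, and by Theorem~\ref{SortinNetworkOutputSetSizeTheorem} that output set is exactly $T_n$; hence $T_n \in G_{n,d}$. Second, as already noted in the proof of Theorem~\ref{SortinNetworkOutputSetSizeTheorem}, every comparator network fixes each sorted input ($V_C(t) = t$ for $t \in T_n$), so $T_n \subseteq S_C$ for every $n$-input comparator network $C$; in particular $T_n \subseteq S_A$ for every $S_A \in G_{n,d}$, and $T_n$ has the least possible cardinality $n+1$.

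First I would establish membership, namely $T_n \in R_{n,d}$. Since $R_{n,d} \in F_{n,d}$ by Lemma~\ref{lemma:reflect} is a complete set of filters all of whose networks have exactly $d$ levels, the hypothesised depth-$d$ sorting network must be matched by some $C \in R_{n,d}$ with $C : C'$ of depth $d$; as $\lvert C \lvert = d$ the completion $C'$ is empty, so $C$ is itself a depth-$d$ sorting network and $S_C = T_n$ by Theorem~\ref{SortinNetworkOutputSetSizeTheorem}. Thus $T_n$ is the output set of an element of $R_{n,d}$. One also checks directly that $MinPi$ cannot discard $T_n$: any $S_B$ with $\pi(S_B) \subseteq T_n$ satisfies $\lvert S_B \lvert = \lvert \pi(S_B) \lvert \le \lvert T_n \lvert = n+1$, while $T_n \subseteq S_B$ forces $\lvert S_B \lvert \ge n+1$, so $S_B = T_n$; hence no strictly smaller set eliminates $T_n$, independently of the ordering.

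Next I would prove uniqueness: every $S_A \in R_{n,d}$ equals $T_n$. Fix such an $S_A$. By the second ingredient, $\mathrm{id}(T_n) = T_n \subseteq S_A$, so $T_n$ permutation-embeds (via the identity) into $S_A$, and $T_n \in G_{n,d}$ belongs to the pool from which $MinPi$ selects. By the definition of minimal representatives (Definition~\ref{def:min_pi} together with the reflection refinement of Lemma~\ref{lemma:reflect}), if $T_n$ were strictly below $S_A$ in the ordering then $S_A$ would be removed; since $S_A$ survives we must have $T_n \not< S_A$. Combined with $T_n$ being the least element of $G_{n,d}$, this forces $S_A = T_n$, giving $R_{n,d} = \{T_n\}$.

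The hard part is the single remaining fact that $T_n$ is the minimum of $G_{n,d}$ in the ordering used by $MinPi$ --- equivalently, that no output set can be ranked strictly below $T_n$ while $T_n$ embeds into it. This is precisely where the cardinality/containment structure must be reconciled with Parberry's lexicographic order: because $T_n$ has the least possible cardinality $n+1$ and is contained in every output set, it is the bottom element of the domination preorder ``$\pi(\cdot) \subseteq \cdot$'', and the order chosen for canonical representatives is compatible with this domination, so $T_n \le S_A$ for all $S_A \in G_{n,d}$. Establishing (or invoking) this compatibility is the crux; once it is in hand, the membership and uniqueness steps close the proof and yield $\lvert R_{n,d} \lvert = 1$.
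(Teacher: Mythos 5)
Your proof is correct and follows essentially the same route as the paper's: the hypothesised sorting network contributes $T_n$ to $G_{n,d}$ (via Theorem~\ref{SortinNetworkOutputSetSizeTheorem}), and since $T_n \subseteq S_B$ for every $n$-input comparator network $B$, every other output set is dominated by $T_n$ and eliminated, forcing $R_{n,d} = \{T_n\}$. The ordering subtlety you flag as the crux --- reconciling the $B < A$ clause of Definition~\ref{def:min_pi} with the domination $\pi(T_n) \subseteq S_A$ --- is likewise left implicit in the paper's own one-line argument, so you are, if anything, more explicit than the source about where the argument leans on the definition.
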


\begin{proof}
Let $C$ be an $n$-input sorting network of depth $d$ and let $B$ be an $n$-input comparator network. Since $R_{n,d}$ contains the minimal representative up to permutation and reflection outputs of itemsets and $T_n \in R_{n, d}$, we have $S_C = T_n$. Since $T_n \subseteq S_B$, then $S_C \subseteq S_B$. Therefore $R_{n,d} = \{ S_C \} = \{ T_n \}$, hence $ \lvert R_{n,d} \lvert = 1 $.
\end{proof}

\section{Empirical Evaluation}
\label{sec:experiments}

Clearly, in terms of the presented algorithm, the optimal depth of an $n$-input sorting network is the smallest $d$ for which Exists-Sorting-Network$(n,d)$ returns $true$. Now we move onto the experimental evaluation of this algorithm and compare it to the current state of the art approach.

\subsection{Environment Setup}
In all of the conducted experiments we used a computer with four Intel Xeon CPU E7- 4820 processors. Each CPU has 8 cores clocked at 2.00GHz, equipped with 8MB of third level cache and 128GB of main memory.

We have summarized a subset of the conducted experiments in Figure~\ref{fig:experiments}. We will now describe in detail how to interpret the data presented in this table by focusing on the row for $n = 11$. All other rows can be interpreted in the same logical manner. The value for $|R_{n,3}|$ is the total number of networks of depth three that are a complete set of filters \cite{Marinov:SortingNetworks:ThirdLevel}. To everyone of these itemsets our algorithm tries to apply every level from $G_n$ and the look-ahead pruning check manages to reduce this number from $|R_{n,3}| * |G_n| = 361\,564\,784$ to $105\,290\,955$ itemsets that could be sortable in three levels. We then find the itemsets that can be sortable in three levels by applying Lemma~\ref{lem:reach_3} to find that only $866\,314$ networks could be extended to sorting ones with the addition of three levels. Similarly, our algorithm then applies all network levels to the sortable in three level itemsets to find the networks that could be sortable in two levels; in this case no such networks exist. Since our pruning techniques are safe we can deduce that there does not exist an eleven-input sorting network of depth seven because there are no networks of depth five that could be extended to sorting ones with the addition of two levels. The total runtime of $311$ seconds is in the same order of magnitude as the total runtime of $105$ seconds (in \cite{BundalaCCSZ14_Optimal_Depth} Table~8 the sum of values for row $n = 11$ at columns ``BEE'' + ``SAT'') that it took Bundala's program to prove that there does not exist an eleven-input sorting network of depth seven; although, it is important to note that we have used different machines (architecture, RAM, CPU power, $\dots$) and that no real conclusion could be drown from comparing these reported runtimes.

\begin{figure} [H]
\centering
\resizebox{\linewidth}{!}{%
\begin{tabular}{|c|c|c|c|c|c|c|c|c|c|c|c|c|c|c|c|c|c|c|c|}
\hline
$n$ 	& $|R_{n,3}|$ & Third-Look-Ahead & Sortable-In-Three & Second-Look-Ahead & Sortable-In-Two & Runtime  \\ \hline
$11$	& $10\,129$ & $105\,290\,955$ & $866\,314$ & $0$ & $0$ & $311$ \\ \hline
$12$	& $117\,517$ & $72\,319\,916$ & $7\,847$ & $0$ & $0$ & $210$ \\ \hline
\end{tabular}
}
\caption{Experimental evaluation of Algorithm~\ref{algo:MinDepthSortNet} for checking whether there exists a sorting network of depth seven for eleven and twelve inputs. The column $|R_{n, 3}|$ gives the number of minimal representative up to permutation and reflection outputs of comparator networks of depth three. The column Third-Look-Ahead presents the total number of output sets that could be sortable in three levels as per Algorithm~\ref{algo:sortable_four}. Sortable-In-Three is a reduction on the column Third-Look-Ahead without applying any levels as per Algorithm~\ref{algo:sortable_three}. Second-Look-Ahead gives the total number of output sets that could be sortable in two levels. Sortable-In-Two is the total number of output sets that are sortable in two levels as per Algorithm~\ref{algo:sortable_two}. The Runtime column presents the total execution time in seconds; i.e. if program is executed on two threads and the individual CPU time of each thread is $5$ seconds then the total Runtime would be $10$ seconds. }
\label{fig:experiments}
\end{figure}

\section{Conclusion and Future Work}

In this paper, we have presented new pruning techniques for the last four levels of a sorting network. These techniques are superior to the ones presented by Parberry (developed only for last two levels). We do not compare our techniques to the ones presented by Codish~\cite{CodishCS14a_The_End_Game} for the last layer because Codish's methods are aimed at the SAT encoding of sorting networks where they consider less input information then what we do. We have also presented an algorithm for finding optimal depth sorting networks that makes use of the newly presented search space size pruning techniques. We have evaluated an implementation of this algorithm to prove (although not novel) the optimal depths of sorting networks for all $n \leq 12$. The runtimes of our program are comparable to those that are reported by Bundala's~\cite{BundalaCCSZ14_Optimal_Depth} state of the art method, although since different machines are used this is to be interpreted as a guideline only. It is important to note that for $n=11$, our pruning techniques reduce the search space size by about $25 \%$ for the fourth level and by $100 \%$ (our pruning technique rejects all networks) for the fifth level when compared to Bundala's approach.

\section{Acknowledgements}


Work supported by the Irish Research Council (IRC).

\bibliographystyle{elsarticle-num}
\bibliography{Algorithm}


\begin{algorithm} [H]
\caption{Pseudo code for our new algorithm for finding minimal depth $n$-input sorting networks. The minimal depth of an $n$-input sorting network is the smallest $d$ for which Exists-Sorting-Network($n, d$) returns true. Proof of correctness of the presented functions is given by comments.}
\label{algo:MinDepthSortNet}
\begin{algorithmic} [1]
\Function{Exists-Sorting-Network}{int $n$, int $d$}
\State $C_0 \leftarrow $ $n$-input comparator network of depth 0
\State $R[0].insert(S_{C_0})$
\Comment $R[i]$ is a set of output sets
\For{$depth = 1, 2, \dots, d $}
	\State $R[depth] \leftarrow$ Generate-Next-Depth$(R[depth - 1], depth, d)$
	\State $R[depth] \leftarrow $ Min-Sets-Up-To-Perm-Refl$(R[depth])$
	\Comment{Algorithm~1 in \cite{Marinov:SortingNetworks:ThirdLevel} for finding the minimal representative itemsets up to permutation and reflection.}
\EndFor

\If{$\lvert R[t] \lvert = 1$}
\Comment{Lemma~\ref{AlgoCorrectExistanceLemma}}
	\If{$\lvert S_{R[t][0]} \lvert = n + 1$}
		\State \Return $true$
		\Comment{Theorem~\ref{SortinNetworkOutputSetSizeTheorem}}
	\EndIf
\EndIf
\State \Return $false$
\EndFunction

\State

\Function{Generate-Next-Depth}{set of output sets $R$, int $n$, int $d$, int $t$}
\State $result \leftarrow \emptyset$
\Comment{set of output sets}
\ForAll{$S_C \in R$}
	\State $Levels \longleftarrow$ Get-All-Levels$ ( S_C, n, d, t ) $
	\ForAll{$L \in$ $Levels$ }
	\Comment{Definition~\ref{ComparatorNetworkDefinition}}
		\State $result \longleftarrow result$ $\bigcup$ $\{ S_{C \oplus L} \}$
	\EndFor
\EndFor
\State \Return $result$
\Comment{$result$ does not contain duplicates}
\EndFunction

\State

\Function{Get-All-Levels}{output set $S_A$, int $n$, int $d$, int $t$}
\State $result \leftarrow \emptyset$
\Comment{set of output sets}
\If{ $ d+4 = t $ }
	\State \Return Get-Fourth-Last-Levels$ ( S_A, n ) $
\EndIf
\If{ $ d+3 = t $ }
	\State \Return Get-Third-Last-Levels$ ( S_A, n ) $
\EndIf
\If{ $ d+2 = t $ }
	\State \Return Get-Second-Last-Levels$ ( S_A, n ) $
\EndIf
\State \Return $G_n$
\Comment{The set of all levels.}
\EndFunction
\end{algorithmic}
\end{algorithm}

\clearpage
\begin{algorithm} [H]
\caption{Pseudo code for constructing the minimal from, to and reach sets for any sorting network $A \oplus B$ by using only the output set $S_A$. Proof of correctness is given by comments referring to theoretical results presented in this paper. The worst case time and space complexities for the function Get-From-To-Reach-Sets are $\BigO{2^n n}$ and $\BigO{2^n}$ respectively.}
\label{algo:construct_reach}
\begin{algorithmic} [1]
\Function{Get-From-To-Reach-Sets}{output set $S_A$}
\For{channel $c = 1 \dots n$}
\State $to[c] \leftarrow \emptyset$
\Comment{$to[c]$ is a set of channels}
\State $from[c] \leftarrow \emptyset$
\Comment{$from[c]$ is a set of channels}
\State $reach[c] \leftarrow \emptyset$
\Comment{$reach[c]$ is a set of channels}
\EndFor

\ForAll{outputs $x,y \in S_A$}
	\If{Get-Different-Coordinates-Count$(x, y) == 1$}
		\State $c \longleftarrow$ Get-First-Different-Coordinate$(x, y)$
		\State $to[ |x| ] \longleftarrow to[ |x| ]$ $\bigcup$ $\{ c \}$
		\State $reach[ |x| ] \longleftarrow reach[ |x| ]$ $\bigcup$ $\{ c \}$
		\State $from[ c ] \longleftarrow from[ c ]$ $\bigcup$ $\{ |x| \}$
		\State $reach[ c ] \longleftarrow reach[ c ]$ $\bigcup$ $\{ |x| \}$
	\EndIf
\EndFor
\State \Return $\langle from, to, reach \rangle$
\EndFunction

\State

\Function{Get-Different-Coordinates-Count}{output $x$, output $y$}
\State $count \longleftarrow 0$
\For{channel $c = 1 \dots n$}
	\If{$x[i] \neq y[i]$}
		\State $count \longleftarrow count + 1$
	\EndIf
\EndFor
\State \Return $count$
\EndFunction

\State

\Function{Get-First-Different-Coordinate}{output $x$, output $y$}
\For{channel $c = 1 \dots n$}
	\If{$x[c] \neq y[c]$}
		\State \Return $c$
	\EndIf
\EndFor
\State \Return $0$
\Comment{Invalid result}
\EndFunction

\end{algorithmic}
\end{algorithm}

\begin{algorithm} [H]
\caption{Pseudo code for finding all network level candidates for the second last level of the comparator network $A$ represented by the output set $S_A$. The worst case time and space complexities for the function Get-Second-Last-Levels are $\BigO{2^n m n}$ and $\BigO{2^n}$ respectively. Correctness of the pseudo code is given by comments referring to results shown in this paper.}
\label{algo:sortable_two}
\begin{algorithmic} [1]
\Function{Get-Second-Last-Levels}{output set $S_A$, int $n$}
\State $levels \longleftarrow \emptyset$
\ForAll{$L \in G_n$}
	\If{Sortable-In-Two-Levels$ ( S_{A \oplus L} ) $ }
		\State \Return $ levels \longleftarrow levels$ $\bigcup$ $\{ L \} $
	\EndIf
\EndFor
\State \Return $levels$
\EndFunction

\State

\Function{Sortable-In-Two-Levels}{output set $S_A$}
\State $\langle from, to, reach \rangle \longleftarrow $ Get-From-To-Reach-Sets$(S_{A \oplus L})$
\For{channel $c = 1 \dots n$}
	\If{$ \lvert from[c] \lvert > 4 $ or $ \lvert to[c] \lvert > 4 $ or $ \lvert reach[c] \lvert > 5 $}
		\State \Return $false$
	\EndIf
\EndFor
\State \Return $true$
\EndFunction
\end{algorithmic}
\end{algorithm}

\begin{algorithm} [H]
\caption{Pseudo code for finding all network level candidates for the third last level of the comparator network $A$ represented by the output set $S_A$. The worst case time and space complexities for the function Get-Third-Last-Levels are $\BigO{2^n m n}$ and $\BigO{2^n}$ respectively. Correctness of the pseudo code is given by comments referring to results shown in this paper.}
\label{algo:sortable_three}
\begin{algorithmic} [1]
\Function{Get-Third-Last-Levels}{output set $S_A$, int $n$}
\State $levels \longleftarrow \emptyset$
\ForAll{$L \in G_n$}
	\If{Sortable-In-Three-Levels$ ( S_{A \oplus L} ) $ }
		\State \Return $ levels \longleftarrow levels$ $\bigcup$ $\{ L \} $
	\EndIf
\EndFor
\State \Return $levels$
\EndFunction

\State

\Function{Sortable-In-Three-Levels}{output set $S_A$}
\State $\langle from, to, reach \rangle \longleftarrow $ Get-From-To-Reach-Sets$(S_{A \oplus L})$
\For{channel $c = 1 \dots n$}
	\If{ $ \lvert from[c] \lvert > 8 $ or $ \lvert to[c] \lvert > 8 $ }
		\State \Return $false$
	\EndIf
\EndFor
\State \Return $true$
\EndFunction
\end{algorithmic}
\end{algorithm}

\begin{algorithm} [H]
\caption{Pseudo code for finding all network level candidates for the fourth last level of the comparator network $A$ represented by the output set $S_A$, also referred to as the ``Look-Ahead''. The worst case time and space complexities for the function Get-Fourth-Last-Levels are $\BigO{2^n m n}$ and $\BigO{2^n}$ respectively. Correctness of the pseudo code is given by comments referring to results shown in this paper.}
\label{algo:sortable_four}
\begin{algorithmic} [1]
\Function{Get-Fourth-Last-Levels}{output set $S_A$, int $n$}
\State $comps \longleftarrow \emptyset$

\ForAll{comparators $ \langle a, b \rangle $}
	\State $L \longleftarrow \{ \langle a, b \rangle \}$
	\State $\langle from, to, reach \rangle \longleftarrow $ Get-From-To-Reach-Sets$(S_{A \oplus L})$
	\If{ $ \lvert from[a] \lvert > 8 $ or $ \lvert from[b] \lvert > 8 $ }
	\Comment{``Look-Ahead''}
		\State $ comps \longleftarrow comps$ $\bigcup$ $\{ \langle a, b \rangle \} $
	\EndIf
\EndFor
\State \Return $\{ L \in G_n ~\lvert~ \forall c \in L ~:~ c \notin comps \}$
\EndFunction
\end{algorithmic}
\end{algorithm}

\end{document}